\newcommand{\abs}[1]{\ensuremath{\left\lvert#1\right\rvert}}
\newcommand{\del}[1]{\left(#1\right)}
\newcommand{\sbr}[1]{\left[#1\right]}
\newcommand{\Set}[1]{\left\{#1\right\}}
\newcommand{\bA}{\mathbb{A}}
\newcommand{\Mr}[1]{\mathrm{#1}}
\newcommand{\cD}{\mathcal{D}}
\newcommand{\cC}{\mathcal{C}}
\newcommand{\cE}{\mathcal{E}}
\newcommand{\cF}{\mathcal{F}}
\newcommand{\cT}{\mathcal{T}}
\newcommand{\nc}{\newcommand}
\nc{\h}{\delta}
\nc{\G}{\Gamma}
\nc{\et}{\eta} 
\nc{\gam}{\gamma}
\nc{\ka}{\kappa}
\nc{\lam}{\lambda}
\nc{\Lam}{\Lambda}
\nc{\ta}{\tau}
\nc{\w}{\omega}
\nc{\io}{\iota}
\nc{\s}{\sigma}
\nc{\vphi}{\varphi}
\nc{\e}{\epsilon}
\renewcommand{\k}{\kappa}
\newtheorem{thm}{Theorem}[section]
\newtheorem{prop}[thm]{Proposition}
\theoremstyle{definition}
\theoremstyle{remark}
\newtheorem{rmk}[thm]{Remark}
\newcommand{\thmref}[1]{Theorem~\ref{#1}}
\newcommand{\secref}[1]{Section~\ref{#1}}
\newcommand{\propref}[1]{Proposition~\ref{#1}}
\numberwithin{equation}{section}
\begin{document}
\title{
A ballistic upper bound on the accumulation of bosonic on-site energies
}

\author{Tomotaka Kuwahara}
\address{\textnormal{(Tomotaka Kuwahara)} Analytical quantum complexity RIKEN Hakubi Research Team,
RIKEN Center for Quantum Computing (RQC), Wako, Saitama 351-0198, Japan}
\email{tomotaka.kuwahara@riken.jp}

\author{Marius Lemm}
	\address{\textnormal{(Marius Lemm)} Department of Mathematics, University of T\"ubingen, 72076 T\"ubingen, Germany }
	\email{marius.lemm@uni-tuebingen.de}
	
	\author{Carla Rubiliani}
	\address{\textnormal{(Carla Rubiliani)}  Department of Mathematics, University of T\"ubingen, 72076 T\"ubingen, Germany }
	\email{carla.rubiliani@uni-tuebingen.de}
\date{\today}

\newcommand{\qmexp}[1]{\big\langle #1\big\rangle}
\newcommand{\A}{\mathbb A}

\maketitle

\begin{center}
     \textit{Dedicated to Israel Michael Sigal on the occasion of his 80th birthday.}
\end{center}

     \begin{abstract}

In this note, we study transport properties of the dynamics generated by translation-invariant and possibly long-ranged Hamiltonians of Bose-Hubbard type.
For translation-invariant initial states with controlled boson density, we improve the known bound on the local repulsive energy at time $t$ from $\langle n^2_x\rangle_t\lesssim t^{2d}$ to $\langle n^2_x\rangle_t\lesssim t^d$. This shows that bosonic on-site energies accumulate at most ballistically. Extending the result to higher moments would have powerful implications for bosonic Lieb-Robinson bounds.
While previous approaches focused on controlling particle transport, our proof develops novel ASTLOs (adiabatic space-time localization observables) that are able to track the growth of local boson-boson correlations. 
\end{abstract}

\section{Introduction}

In relativity, energy and information cannot travel faster than the speed of light. In other areas of physics, there are many instances of system-dependent speed limits that are much slower than the speed of light, such as the speed of sound.
This prompts a fundamental question: whether in quantum systems, even non-relativistic ones, similar speed limits emerge on the propagation of information or other physical quantities.
In a seminal work from 1972, Lieb and Robinson addressed this question, establishing bounds on the speed of quantum information propagation, for quantum spin systems which are governed by local and bounded interactions \cite{lieb1972finite}. 
In the early 2000s, Hastings and others employed and extended the Lieb-Robinson bounds (LRBs)  as a powerful analytical tool to answer many long-standing open problems in mathematical physics \cites{Hastings2004-ph,hastings2006spectral,nachtergaele2006lieb,hastings2007area}. 
Since then, LRBs have gained widespread recognition, largely due to their versatility and strength in rigorous proofs; see e.g., \cites{hastings2005quasiadiabatic, bravyi2006lieb,nachtergaele2006propagation} and the reviews \cites{gogolin2016equilibration,chen2023speed}.
As a natural consequence, the need to extend them to more general settings has grown into a research area of its own, e.g., the case of long-range bounded interactions was treated in \cites{CGS,Fossetal,TranEtAl1,TranEtAl2,TranEtal4, KS_he,lemm2025enhanced}.

 However, our understanding of  LRBs for unbounded interactions remains incomplete.  A natural and physically relevant source of unbounded interactions on lattices is the 
 Bose-Hubbard Hamiltonian 
\begin{align}\label{bh hamilt}
H_{\mathrm{B-H}}=\sum_{\substack{x,y\in\Lambda}}J_{x,y} b_x^\dagger b_y+U\sum_{x\in\Lambda} n_x(n_x-1)-\mu\sum_{x\in\Lambda}n_x.  
\end{align}
While closely related oscillator models could still be treated in certain perturbative regimes \cite{Nachtergaele2009-rl}, for the Bose-Hubbard Hamiltonian \eqref{bh hamilt} the prior techniques break down due to the possibility of boson accumulation and thus the problem of information propagation must be fundamentally reconsidered.
Starting from 2011 \cite{SHOE}, it became apparent that controlling particle propagation is a crucial input step to derive LRBs in the Bose-Hubbard setting, as it allows to quantitatively control the local accumulation of  bosons, which is the source of the unbounded interactions. This led to a focus on deriving particle propagation \cite{YL,Kuwahara2024-nf,MR4470244,LRSZ,SiZh,lemm2023microscopic,Lemm2025-ov} also for the related problem of macroscopic transport \cite{LRSZ,MR4419446,van2024optimal,faupin2025macroscopic}.
Using particle propagation bounds, ballistic LRBs have been investigated and proven for various kinds of special initial states \cite{SHOE,YL,MR4470244,LRSZ,SiZh}; see also \cite{kuwahara2021lieb,wang2020tightening}. For the more general class of bounded-density initial states,  \cite{KVS} proved a bound that allows for information to accelerate in $d>1$ --- the velocity bound scales as $v\sim t^{d-1}$. This has raised the open question whether such accelerated information propagation is indeed possible in Bose-Hubbard systems or if it is in fact prevented under more stringent assumptions, e.g., on the initial state.
In this direction, a recent work \cite{Kuwahara2024-nf} by two of us showed that energy conservation can meaningfully control information propagation further, showing that translation invariance and a strong local repulsion $n_x^p$, with $p$ sufficiently large, provide an almost-ballistic bound on information propagation for any bounded energy density  initial state.

In this work, we continue  the central theme of \cite{Kuwahara2024-nf} of tracking local energy accumulation. We prove a new bound on the accumulation of local repulsive energy under the assumption of translation invariance. We show
\begin{align}
     \qmexp{\psi(t),n_x^2\psi(t)}\lesssim  t^{d}.\label{improve sclaing}
 \end{align}
In terms of $t$-scaling, our result compares favorably to previous works \cites{KVS,lemm2023microscopic,LRSZ,Lemm2025-ov}, where the following scaling was obtained
 \begin{align}
     \qmexp{\psi(t),n_x^2\psi(t)}\lesssim  t^{2d}.
 \end{align}
 The underlying reason is that the prior works only relied on bounding particle transport ballistically. Indeed, a constant speed of particle transport allows for a scenario where, given a fixed site $x$, all $\sim t^d$ particles that are initially on the ball around $x$ of radius $\sim t$ may reach $x$ within time $t$, which would indeed make $n_x^2\sim t^{2d}$. Our result \eqref{improve sclaing} shows that this does not happen in the translation-invariant case. The strategy, roughly, is to track the local energy accumulation directly instead of relying on particle accumulation as an indicator. The implementation of this idea comes with another wrinkle, essentially because the time derivative of the local energy (commutator with the Hamiltonian) produces local  correlation functions of the particle number that also need to be controlled in the proof. 

 More precisely, the proof relies on the so-called ASTLO (adiabatic space-time localization observables) method. It is inspired by the construction of propagation observables by Sigal and Soffer in the context of few-body scattering theory \cite{SigSof}, later refined in \cites{skibsted, Herbst1995-ac}. It was also extended to a model of quantum electrodynamics by Bony-Faupin-Sigal \cite{Bony2012-pz} and streamlined by Arbunich-Pusateri-Sigal-Soffer \cite{APSS}.
 The key step for this kind of approach is to obtain differential inequalities for propagation observables using  commutator expansions inspired by semiclassical ideas with the time and distance as the effective large parameter. 
 A many-body version was developed in \cites{MR4419446,MR4470244} and then refined in \cite{LRSZ,SiZh, lemm2023microscopic,Lemm2025-ov} particularly to long-ranged Bose-Hubbard Hamiltonians. In this context, the tracked observables were smoothed-out number operators that track particles moving across the lattice. 
 The form of the ASTLO, together with the CCR relations, allows to recover an important recursive structure for these operators to control their time evolution. 
 
 The main novelty of our proof is to construct and control new kind of ASTLO focused on capturing the evolution of local correlations, instead of particle number. 
The key idea is that local on-site energy can be seen as an instance of \textit{local particle-particle correlations}, and we can control the latter because it is dynamically approximately closed under commutators which allows to develop a similar recursive argument. While prior ASTLOs were \textit{quadratic} in bosonic creation and annihilation operators, the new ASTLO we consider here to control local particle-particle correlations is instead \textit{quartic} in them.

Finally, we emphasize that the previous works \cites{KVS,lemm2023microscopic,LRSZ,Lemm2025-ov} proved general moment bounds
 \begin{align}
     \qmexp{\psi(t),n_x^p\psi(t)}\lesssim  t^{pd}.
 \end{align}
It would be extremely interesting to extend an improvement of the form \eqref{improve sclaing} to higher moments, as this  
 would improve the time scaling of the Lieb-Robinson velocity $\sim t^{d-1}$ found in \cite{KVS} and would thus shed light on the open question of when information acceleration can occur in bosonic systems. For more details, refer to the outlook following the statement of the main result.

\subsection{Organization of the paper.}
The paper is structured as follows.
\begin{itemize}
    \item In \secref{sec notation}, we describe our setup, specifically the type of Hamiltonian and Hilbert space we work with, together with some important notation that we will use across the paper. Afterwards, we specify the assumptions on the Hamiltonian and the initial state, and we state the main result, the bound on correlation propagation and its corollary, the ballistic accumulation of bosonic on-site energies.
    \item In \secref{sec ASTLO}, we define the new ingredient of our proof, the quartic ASTLOs that track local correlations of the particle number. We establish the geometric property that allows us to relate them to the number operator itself.
    \item In \secref{sec proof}, we prove the main result by implementing a bootstrapping scheme for the new correlation ASTLOs.
\end{itemize}

\bigskip
\section{Setup and main result}\label{sec notation}

\subsection{Setup and notation}
Let $\Lambda\equiv\Lambda_L\subset \mathbb Z^d$ be a discrete torus of side length $L$, equipped with the Euclidean metric. 
Consider the following bosonic Hamiltonian,
\begin{align}\label{Hamiltonian}
    H\equiv H_{\Lambda}=\sum_{\substack{x,y\in\Lambda}}J_{x,y} b_x^\dagger b_y+V,
    \end{align}
    acting on the bosonic Fock space $\cF_{\Lambda}$ over the one particle Hilbert space $\ell^2(\Lambda)$.
 The matrix    $J=(J_{x,y})_{x,y\in\Lambda}$ is real and symmetric, and it is commonly referred to as the hopping matrix. Here $b_x^\dagger$ and $b_x$ are the bosonic creation and annihilation operators; therefore, they satisfy the canonical commutation relations, or CCR, 
 \begin{align*}
    [b_x,b_y]=[b_x^\dagger,b_y^\dagger]=0 \qquad \text{and} \qquad [b_x,b_y^\dagger]=\delta_{x,y}.  
 \end{align*}
The local number operator is defined as $n_x:=b_x^\dagger b_x$ for every $x\in\Lambda$. We denote by $n_0$ the local number operator supported at the origin of the lattice. 
 The potential is an arbitrary real function of the local number operators, $V=\Phi(\Set{n_x}_{x\in\Lambda})$. Notice that \eqref{bh hamilt} is a special case of \eqref{Hamiltonian}. Given a certain region $X\subseteq\Lambda$,
\begin{align}
    N_X:=\sum_{x\in X}n_x\notag
\end{align}
defines the number operator restricted to such a region $X$. For ease of notation we write $N$ when $X=\Lambda$. It has been shown in  \cite{MR4470244}*{App. A} that Hamiltonians as in \eqref{Hamiltonian} are self-adjoint on 
\( \mathcal D(H)=\{(\psi_N)_{N\geq 0}\in \cF_{\Lambda}\,:\, \sum_{N\geq 0}\|H \psi_N\|^2<\infty \}  .\)

\subsection{Assumptions on the Hamiltonian}
The following are the assumptions we impose on the Hamiltonian.\\

\textit{Assumption 1: translation invariant Hamiltonian.}
    We assume the Hamiltonian to be translation invariant, i.e., for all $x\in\Lambda$,
\begin{align}
    \cT_x^*H\cT_x=H\notag
\end{align}
where $\cT_x$ denotes the unitary translation operator by $x$, $(\cT_x\psi)(y) = \psi(y+ x \mod L)$. From now on, for ease of notation, we will write $x+y$ instead of $ x+y\mod L$ when it is not ambiguous. Translation invariance for the Hamiltonian implies $J_{x,y}=J_{ x+z,y+z}$ for any $x,y,z\in\Lambda$.
\\

\textit{Assumption 2: power-law decay of hopping matrix.}
Additionally, we require polynomial decay of the hopping matrix, specifically, let $\alpha>3d+1$ and set
\begin{align}\label{CJdef}
    C_{J}:=&	\sup_{x,y\in\Lambda}\abs{J_{x,y}}(1+\abs{x-y})^{\alpha},
\end{align}
where $C_J$ is independent of $\Lambda$. 
It follows from \eqref{CJdef} that
\begin{align}\label{ass CJ}
    \abs{J_{x,y}}\le C_J (1+\abs{x-y})^{-\alpha},\qquad \forall x,\,y\in\Lambda.
\end{align}
Thus, the moments of the hopping matrix
\begin{align}\label{kappa b}
    \kappa^{(\beta+1)}:=\sup_{x\in\Lambda}\sum_{y\in\Lambda}\abs{J_{x,y}}\abs{x-y}^{\beta+1},
\end{align}
are bounded by a constant, uniformly in $\Lambda$, for all $\beta< \alpha-d-1$.
 The first moment of the hopping matrix is a key quantity, and we denote it as follows
\begin{align}\label{kappa}
    \kappa\equiv \kappa^{(1)}:=\sup_{x\in\Lambda}\sum_{y\in\Lambda}\abs{J_{x,y}}\abs{x-y}.
\end{align}

\subsection{Assumptions on the initial states}
As we anticipated in the introduction, when working with bosonic systems, we can only expect to obtain bounds on expectation values. This is because we cannot hope to control the full Hilbert space, as initial states with many bosons accumulating on a small number of lattice sites could lead to accelerated dynamics. Our result holds for initial states that satisfy the following assumptions. \\

 \textit{Assumption 3: translation-invariant initial state. }We restrict our attention to translation invariant initial states $\psi_0\in\mathcal D(H)$. Then the assumption of translation invariance of Hamiltonian implies  ${\psi_t=e^{-itH}\psi_0}$ is translation invariant as well. We denote by $\langle\cdot\rangle_t=\langle\psi_t,\cdot\psi_t\rangle$ the corresponding expectation value. \\
 
    \textit{Assumption 4: controlled-density initial state.} We require initial states to satisfy an assumption of controlled density, in other words, there exists $\lambda>0$, such that for $p=1,2$,
    \begin{align}\label{CD2}
    \qmexp{N_{B_r}^p}_0\le (\lam r^d)^p, \qquad  r\ge0.    \end{align}
Here $B_r:=\Set{y\in\Lambda\text{ s.t. }\abs{y}\le r}$, denotes the discrete ball of radius $r$ with center at the origin, and $B_0=\Set{0}$.

We denote by $\mathcal D_T$ the set of states  $\psi\in\mathcal D(H)\cap \mathcal D(N) $ that are translation invariant, and by $\mathcal D^\lambda_T$ the set of states in $\mathcal D_T$ that additionally satisfy \eqref{CD2} for a given $\lambda>0$. 
We remark that
\begin{itemize}
    \item These assumptions are physically relevant, as Mott states are included in this class of initial states.
    \item Translation invariance is essential for our proof, as it is a fundamental ingredient for recovering the recursive structure in our inequalities,  which is a key step.
\end{itemize}

\subsection{Main result}

Our goal is to prove the following bound on local particle-particle correlations. 
\begin{thm}[Bound on correlations]\label{prop corr tr inv}
    Assume \eqref{ass CJ} holds for some $\alpha>3d+1$ and define $\beta:=\lfloor\alpha-d-1\rfloor$. Consider $\lambda>0$ and $\psi_0\in\mathcal D^\lambda_T $. Then, for any $v>2\kappa$ there exists
    \begin{align}
    &C=(v,d,\lambda, \alpha,C_J),  \notag
    \end{align}
     such that, for all $R>r\ge 0$ satisfying $R-r>\max\Set{v,1}$, it holds
\begin{align}\label{stat corr tr inv}
    \sup_{vt\le R-r}\qmexp{n_0 N_{B_r}}_t&\le  \del{1+\frac{C}{R-r}+\frac{C}{(R-r)^{\beta-2d}}}R^d\lam^2.
\end{align}

\end{thm}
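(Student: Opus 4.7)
My approach would use the ASTLO framework with a new quartic observable designed to track $\langle n_0 N_{B_r}\rangle_t$ directly. Fix a smooth nonincreasing cutoff $\chi \in C^\infty(\mathbb R)$ with $\chi = 1$ on $(-\infty, 0]$ and $\chi = 0$ on $[1, \infty)$; for $s, \delta > 0$ set $f_s(x) := \chi((|x|-s)/\delta)$ and define the smoothed number operator $M_s := \sum_{x \in \Lambda} f_s(x)\, n_x$. The correlation ASTLO is then
\[
A_s := n_0 M_s = \sum_{x\in\Lambda} f_s(x)\, n_0 n_x,
\]
which is quartic in $b, b^\dagger$. Since $f_s \equiv 1$ on $B_s$ and is supported in $B_{s+\delta}$, one has $M_s \ge N_{B_r}$ whenever $s \ge r$, so $\langle A_s\rangle_t \ge \langle n_0 N_{B_r}\rangle_t$, while $\langle A_R\rangle_0 \le \langle n_0 N_{B_{R+\delta}}\rangle_0$ at the initial time.

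With $s(t) := R - vt$ for $v > 2\kappa$, and using that $V$ commutes with all $n_x$, Leibniz gives
\[
\frac{d}{dt}\langle A_{s(t)}\rangle_t = -v\,\langle\partial_s A_s|_{s(t)}\rangle_t + \langle i[H_{\mathrm{hop}}, n_0]\, M_{s(t)}\rangle_t + \langle n_0\, i[H_{\mathrm{hop}}, M_{s(t)}]\rangle_t.
\]
For the bulk term $n_0\, i[H_{\mathrm{hop}}, M_s]$, I would adapt the quadratic ASTLO analysis of \cite{MR4470244, LRSZ, Lemm2025-ov}: writing $i[H_{\mathrm{hop}}, M_s] = i\sum_{x,y} J_{x,y}(f_s(y)-f_s(x))\, b_x^\dagger b_y$ and Taylor-expanding $f_s(y)-f_s(x)$ around $s$ in powers of $y-x$ produces a leading contribution $\sim \kappa\, n_0\, \partial_s M_s$ (which together with $-v\,\partial_s A_s$ becomes nonpositive once $v > 2\kappa$), plus higher-order remainders whose CCR-based operator bounds are of order $\kappa^{(\beta+1)}/\delta^\beta$ times lower-order ASTLOs already controlled by prior particle-propagation results.

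The novel difficulty is the current term $\langle i[H_{\mathrm{hop}}, n_0]\, M_s\rangle_t$, where $i[H_{\mathrm{hop}}, n_0] = i\sum_y J_{0,y}(b_y^\dagger b_0 - b_0^\dagger b_y)$ is not of $n\cdot n$ form. Using the CCR inequality $\pm i(b_y^\dagger b_0 - b_0^\dagger b_y) \le n_0 + n_y$ together with a symmetrization $XM_s \to \tfrac{1}{2}(XM_s + M_s X)$ and Cauchy--Schwarz, one bounds this expectation by $C\kappa\,\langle n_0 M_s\rangle_t + \sum_y |J_{0,y}|\,\langle n_y M_s\rangle_t$; translation invariance recasts $\langle n_y M_s\rangle_t$ as a shifted $\langle A_s\rangle_t$, yielding a Gronwall-type self-bound. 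Assembling into a differential inequality, integrating over $t \in [0,(R-r)/v]$, and combining with the initial bound $\langle A_R\rangle_0 \le \lambda^2 (R+\delta)^d$ obtained by Cauchy--Schwarz and \eqref{CD2}, gives the claimed estimate after optimizing $\delta \sim R-r$.

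The principal obstacle is the current term: unlike in the quadratic setting, where $[H_{\mathrm{hop}}, M_s]$ essentially closes within the class of smoothed number operators modulo CCR, $[H_{\mathrm{hop}}, n_0]$ produces non-positive noncommuting operators that must be reduced to positive quartic number observables via CCR and translation-invariant symmetrization. The tension between the smoothing scale $\delta$ (which controls the Taylor remainders at order $1/\delta^{\beta-2d}$) and the cone width $R - r$ is what produces the two distinct correction terms $C/(R-r)$ and $C/(R-r)^{\beta-2d}$ in the final bound.
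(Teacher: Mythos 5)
Your overall framework matches the paper closely: the same quartic observable $A_s = n_0 M_s$ with $M_s$ a smoothed restricted number operator, the same sandwich/geometric step reducing $\qmexp{n_0 N_{B_r}}_t$ to $\qmexp{A_{s(t)}}_t$, the same decomposition of the commutator into a ``bulk'' piece $n_0[iH,M_s]$ and a ``current'' piece $[iH,n_0]M_s$, and the same symmetrized Taylor expansion (plus moving $n_0$ to the center via CCR) for the bulk piece. However, your treatment of the current term $\qmexp{i[H_{\mathrm{hop}},n_0]\,M_s}_t$ has a genuine gap, and it sits exactly at the novel point of the paper. You propose the operator inequality $\pm i(b_y^\dagger b_0 - b_0^\dagger b_y)\le n_0+n_y$ followed by translation invariance and a Gronwall self-bound of the schematic form $\tfrac{d}{dt}\qmexp{A_s}_t \lesssim \kappa\qmexp{A_s}_t + \cdots$. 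This bound carries no small factor of $1/s$ or $1/\delta$: the coefficient is $O(\kappa)$, independent of the cone width. Integrating by Gronwall over the interval $t\in[0,(R-r)/v]$ therefore produces a factor of order $e^{C\kappa (R-r)/v}$, which grows exponentially in $R-r$ and cannot yield the polynomial $R^d\lambda^2$ bound in \eqref{stat corr tr inv}. The bulk term does produce the needed smallness because the commutator $[H,M_s]$ manufactures differences $f_s(x)-f_s(y)$, which are $O(|x-y|/\delta)$; the naive treatment of $[H,n_0]M_s$ manufactures no such difference in the cutoff.

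The paper's key algebraic step, which your proposal is missing, is to use translation invariance of the state \emph{before} applying any operator inequality: one shifts inside the expectation, $\qmexp{n_y\, b_{-x}^\dagger b_0}_t = \qmexp{n_{y+x}\, b_0^\dagger b_x}_t$, which converts $\sum_{y}\chi_{ts}(y)\sum_{x} J_{x,0}\qmexp{n_y(b_x^\dagger b_0 - b_0^\dagger b_x)}_t$ into $\sum_{x,y} J_{x,0}\left(\chi_{ts}(y-x)-\chi_{ts}(y)\right)\qmexp{n_y\, b_0^\dagger b_x}_t$. This manufactures the missing difference of cutoff functions in the current term too, so that the same symmetrized Taylor expansion applies and the current contribution is also $\tfrac{\kappa}{s}\qmexp{\mathbb A'_{ts}}_t$ to leading order, combining with the bulk term to $\tfrac{2\kappa}{s}\qmexp{\mathbb A'_{ts}}_t$ and hence absorbed by the drift once $v>2\kappa$. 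Without this rewriting the differential inequality does not close. A secondary omission: the remaining $\kappa^{(\beta+1)}$-sized tail is not yet a self-contained ASTLO quantity; the paper bootstraps the differential inequality to lower order and then inserts the externally known bound $\qmexp{n_0^2}_t\lesssim \lambda^2 (R-r)^{2d}$ (\thmref{theo contr rem d+1}), which is what actually produces the second correction exponent $(R-r)^{-(\beta-2d)}$ in \eqref{stat corr tr inv}, rather than a $\delta$-optimization.
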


\begin{rmk}
  Notice that for $r=0$ and invoking translation invariance, \eqref{stat corr tr inv} implies 
  \begin{align}\label{bound on rep en}
       \qmexp{n_x^2}_t\lesssim R^d \qquad \text{for every }x\in\Lambda.
  \end{align}
 Such an inequality can be interpreted as a bound on the local repulsive energy when considering a Hamiltonian of the kind \eqref{bh hamilt}.
\end{rmk}
\begin{rmk}
    The assumption on the decay of the hopping elements can be improved from $\alpha>3d+1$ to $\alpha>d+1$ by employing a multi-scale analysis as in \cite{lemm2023microscopic} and substituting \thmref{theo contr rem d+1} with \cite{Lemm2025-ov}*{Theorem 2.1}. Notice that in the latter, a lower bound on the the density of the initial states is required. Such an assumption can be lifted by post-processing.
\end{rmk}
\textbf{Outlook.} As mentioned in the introduction, for $R\sim t$, our bound \eqref{bound on rep en} leads to a  $t^d$ bound on $\qmexp{n_x^2}_t$, which compares favorably to the usual particle propagation bounds, which would give a bound of the form $\qmexp{n_x^2}_t\lesssim t^{2d}$ (refer to \cites{lemm2023microscopic,Lemm2025-ov, KVS,LRSZ,YL}). If an improved $t$-scaling persists for higher moments of $n_x$, then this would have powerful implications for the Lieb-Robinson velocity of Bose-Hubbard Hamiltonians for controlled-density initial states in translation-invariant  settings.
    
 Indeed, a crucial step in \cite{KVS} consists in controlling the spectral projector $\qmexp{\Pi_{n_x\ge q}}_t$ onto the states with occupation number greater than a certain $q$ to bound the error between the original and a suitably truncated dynamics. Now, suppose that one could show, e.g., that also $p$th moments accumulate at most ballistically, i.e., $\qmexp{n_x^p}_t\lesssim t^d$ for arbitrary high moment $p$ under appropriate assumptions.
 Then, by Markov's inequality, one could obtain
\begin{align*}
    \qmexp{\Pi_{n_x\ge q}}_t\le \frac{\qmexp{n_x^p}_t}{q^p}\le C_p\del{\frac{t^{d/p}}{q}}^p=C_{d/\epsilon}\del{\frac{t^{\epsilon}}{q}}^{d/\epsilon},
\end{align*}
with $\epsilon=d/p$ arbitrary small. This would imply that the local boson number can be effectively truncated at level $q\sim t^\epsilon$. 
As the Lieb-Robinson velocity would be proportional to the local occupation number, one would obtain $v\sim q\sim t^\epsilon$ for an arbitrarily small $\epsilon$, which would be almost ballistic  and much less than the $v\sim t^{d-1}$ bound shown in \cite{KVS}. We emphasize that it is presently not clear under what circumstances one would physically expected a bound such as $\qmexp{n_x^p}_t\lesssim t^d$ for all moments.

Finally, another interesting open question is whether the bound on the second moment proved here extends to non-translation-invariant settings.

\bigskip
\section{Correlation ASTLO}\label{sec ASTLO} 
Consider $v>2\kappa$, where $\kappa$ is defined as in \eqref{kappa}, and define the following two quantities
\begin{align}
    \tilde v&:=\frac{\kappa+v}{2}\in(\kappa,v),\label{tilde v}\\
    \epsilon&:=v-\tilde v>0.  \label{epsilon}
\end{align}
We introduce the class of cutoff functions, essential to build the ASTLO,
\begin{align}\label{function class}
  \mathcal{C}\equiv \cC_\epsilon:=\left\{
    f\in C^{\infty} (\mathbb{R})\left\vert \begin{aligned} &f\geq0,\: f\equiv 0 \text{ on } (-\infty,\epsilon/2],f\equiv 1 \text{ on } [\epsilon,\infty) \\ &f'\geq 0,\:\sqrt{f'}\in C_c^{\infty} (\mathbb{R}),\:\mathrm{supp} f'\subset(\epsilon/2,\epsilon)
    \end{aligned}\right.\right\}.  
\end{align}
Let $R>r\ge 0$ and write $s:=(R-r)/v$. Then, for every  $t\ge 0$ and every function $\chi\in \cC$ we define a new function $\chi_{ts}:\mathbb R^d \to [0,1]$ by
\begin{align}
    \chi_{ts}(x)=\chi\left(\frac{R-\tilde vt-|x|}{s}\right),\notag\qquad x\in\mathbb R^d.
\end{align}
 Notice that
\begin{align}\label{derivative chi}
    \chi_{ts}'(0)=0,
\end{align}
where, for ease of notation, $\chi'_{ts}$ has to be understood as $(\chi')_{ts}$.
Equality \eqref{derivative chi} holds true since
\begin{align}
    \chi_{ts}'(0)=\chi'\del{\frac{R-\tilde vt}{s}},\notag
\end{align}
and it can be checked that $(R-\tilde vt)/s>\epsilon$ and  $\mathrm{supp}\,\chi'\subset(\epsilon/2,\epsilon)$.
We define a new ASTLO (adiabatic space-time localization observable) that measures particle-particle correlations,
\begin{align}
  \mathbb A_{ts}:=\sum_{x\in\Lambda}\chi_{ts}(x) n_0 n_x.\notag  
\end{align}
It is essential to design the correct ASTLO. For example:
\begin{itemize}
    \item         $\sum_x \chi_{ts}(x) n_x^2$ does not allow to close the recursive structure.
    \item  $\sum_{x,y} \chi_{ts}(x)\chi_{ts}(y) n_x n_y$ would lead to the usual time scaling as $t^{2d}$ since already at initial time it is of size $r^{2d}$.
\end{itemize}
These two aspects create obstacles to controlling higher moments.

\begin{prop}[Geometric properties of the ASTLO]\label{prop geom prop}
For any $v>\kappa$, $\chi\in\cC$,  $R>r>0$, $\lambda>0$, and $\psi_0\in\mathcal D^\lambda_T $, the following holds
\begin{align}\label{ftc in prop}
    \qmexp{n_0 N_{B_{r}}}_t\le R^d\lam^2+\int_0^t \frac{d}{d\tau} \qmexp{\A_{\tau s}}_\tau d\tau,
\end{align}
for every $t\le s$, where $s:=(R-r)/v$. 
\end{prop}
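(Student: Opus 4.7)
The strategy is to apply the fundamental theorem of calculus to $\tau\mapsto\qmexp{\A_{\tau s}}_\tau$ and combine it with two operator-level estimates: (i) a lower bound $\A_{ts}\geq n_0 N_{B_r}$ valid for every $t\in[0,s]$, so that $\qmexp{n_0 N_{B_r}}_t\leq\qmexp{\A_{ts}}_t$, and (ii) an initial-time upper bound $\qmexp{\A_{0s}}_0\leq R^d\lam^2$. Granting (i) and (ii), the identity \eqref{ftc in prop} follows directly from $\qmexp{\A_{ts}}_t=\qmexp{\A_{0s}}_0+\int_0^t\frac{d}{d\tau}\qmexp{\A_{\tau s}}_\tau\,d\tau$.

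To establish (i), I would analyse the cutoff $\chi_{ts}(x)=\chi\del{\frac{R-\tilde v t-|x|}{s}}$ and identify the region where it equals $1$, namely where $\frac{R-\tilde v t-|x|}{s}\geq\ep$. Using the identity $R-\ep s=r+\tilde v s$, which is immediate from $\ep=v-\tilde v$ and $s=(R-r)/v$, this condition rewrites as $|x|\leq r+\tilde v(s-t)$. Since $\tilde v>0$ and $t\leq s$, every $x\in B_r$ satisfies it, and therefore $\chi_{ts}(x)\geq\mathbbm{1}_{B_r}(x)$ pointwise. Because the family $\{n_x\}_{x\in\Lambda}$ pairwise commutes and each operator is non-negative, the products $n_0 n_x$ are non-negative self-adjoint operators; multiplying the pointwise inequality by $n_0 n_x$ and summing over $x\in\Lambda$ gives the operator inequality $\A_{ts}\geq n_0 N_{B_r}$.

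For (ii), the function $\chi_{0s}$ is bounded by $1$ and supported in $\Set{|x|<R-\ep s/2}\subseteq B_R$, so $\A_{0s}\leq n_0 N_{B_R}$ as operators. Expanding $N_{B_R}=\sum_{x\in B_R}n_x$ and applying the Cauchy-Schwarz inequality for the commuting self-adjoint operators $n_0, n_x$,
\begin{align*}
\qmexp{n_0 n_x}_0\leq\qmexp{n_0^2}_0^{1/2}\qmexp{n_x^2}_0^{1/2},
\end{align*}
together with translation invariance (which gives $\qmexp{n_x^2}_0=\qmexp{n_0^2}_0$), reduces the task to controlling $\qmexp{n_0^2}_0$. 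The controlled-density assumption \eqref{CD2} at scale $r=1$, combined with the operator inequality $n_0^2\leq N_{B_1}^2$ coming from $\{0\}\subseteq B_1$, yields $\qmexp{n_0^2}_0\leq\lam^2$, and then $\qmexp{\A_{0s}}_0\leq|B_R|\,\lam^2\lesssim R^d\lam^2$ (absorbing a $d$-dependent constant into the statement).

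There is no substantial analytical obstacle at this step: the proposition is essentially a geometric and kinematic statement about the structure of the ASTLO together with an application of Cauchy-Schwarz at the initial time. The real analytical work of the paper, and the main novelty, lies not here but in bounding the Heisenberg derivative $\frac{d}{d\tau}\qmexp{\A_{\tau s}}_\tau$ produced by commuting $\A_{\tau s}$ with $H$, which is carried out in the next section via a bootstrapping argument.
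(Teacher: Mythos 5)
Your proposal is correct and follows essentially the same route as the paper: the geometric inclusion $\Set{|x|\le r}\subset\Set{\chi_{ts}=1}$ for $t\le s$, the support bound $\mathrm{supp}\,\chi_{0s}\subset B_R$, a Cauchy--Schwarz estimate at time zero, and the fundamental theorem of calculus. The one place you diverge is the time-zero bound: you apply Cauchy--Schwarz \emph{termwise} to each $\qmexp{n_0 n_x}_0$ and use translation invariance, arriving at $\qmexp{\A_{0s}}_0\le |B_R|\,\lambda^2$, which carries a dimensional constant $|B_R|/R^d$ that is \emph{not} present in the stated bound $R^d\lambda^2$. The paper instead applies Cauchy--Schwarz to the full inner product, $\qmexp{n_0 N_{B_R}}_0\le\sqrt{\qmexp{n_0^2}_0\qmexp{N_{B_R}^2}_0}$, and then invokes \eqref{CD2} with $p=2$ at radius $R$ (for $\qmexp{N_{B_R}^2}_0\le\lambda^2 R^{2d}$) and at radius $1$ (for $\qmexp{n_0^2}_0\le\lambda^2$), giving exactly $\lambda^2 R^d$ with no extra constant. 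Your remark about ``absorbing a $d$-dependent constant into the statement'' signals that you noticed the mismatch; the cleaner fix is simply to Cauchy--Schwarz on the full sum rather than term by term, which lets the hypothesis \eqref{CD2} do the counting for you and reproduces the constant-free bound in \eqref{ftc in prop}. Everything else --- the operator positivity argument for $\A_{ts}\ge n_0 N_{B_r}$ and the FTC wrap-up --- matches the paper's reasoning.
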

\begin{proof}
    Notice that, thanks to the properties of $\cC$, it holds that
    \begin{align}\label{eq:comparechiind}
        \mathbbm 1_{r\geq \epsilon}\leq \chi(r)\leq \mathbbm 1_{r\geq \frac{\epsilon}{2}}.
    \end{align}
  Recalling the definitions for $\tilde v$ \eqref{tilde v} and $\epsilon$ \eqref{epsilon},  for every $t\le s$,
    \begin{align}\label{sets inclusion bound below}
        \Set{\abs{x} \le R -\tilde vt-\epsilon s}\supset \Set{\abs{x} \le R -\tilde vs-vs+\tilde vs}=\Set{\abs{x} \le R -vs}=\Set{\abs{x}\le r}.
    \end{align}
      Lines \eqref{eq:comparechiind} and \eqref{sets inclusion bound below} imply that the time-$t$ ASTLO $\mathbb A_{ts}$ satisfies, for every $t\le s$, the following lower bound
    \begin{align}
        \qmexp{\A_{ts}}_t\geq
    \sum_{\abs{x} \le R -\tilde vt-\epsilon s} \qmexp{n_0 n_x}_t \ge \sum_{\abs{x} \,\le\, r} \qmexp{n_0 n_x}_t= \qmexp{n_0 N_{B_{r}}}_t.\notag
    \end{align}
    On the other hand, by \eqref{eq:comparechiind}, the time-zero ASTLO $\mathbb A_{0s}$ satisfies 
     \begin{align}
        \qmexp{\A_{0s}}_0\leq
    \sum_{|x|\leq R-\epsilon s/{2}} \qmexp{n_0 n_x}_0\le \sum_{|x|\leq R} \qmexp{n_0 n_x}_0=\qmexp{n_0 N_{B_R}}_0,\notag
    \end{align}
    where in the last inequality, we used the fact that the local number operator is positive. The Cauchy-Schwarz inequality yields
    \begin{align}\label{ub astlo before dens contr}
        \qmexp{\A_{0s}}_0\leq \sqrt{\qmexp{n_0^2}_0\qmexp{N^2_{B_R}}_0}.
    \end{align}
    Given the assumption of controlled density \eqref{CD2}, inequality \eqref{ub astlo before dens contr} implies
    \begin{align}
        \qmexp{\A_{0s}}_0\leq \lam^2 R^d.\notag
    \end{align}
      That is, the time-zero ASTLO is bounded by a constant times $R^d$. We now combine these facts. By the fundamental theorem of calculus, we have
    \begin{equation}\label{eq:ftc}
        \qmexp{n_0 N_{B_{r}}}_t
      \leq  \qmexp{\A_{ts}}_t
      \leq  \qmexp{\A_{ts}}_t+\lam^2 R^d-\qmexp{\A_{0s}}_0
    =\lam^2 R^d+\int_0^t \frac{d}{d\tau} \qmexp{\A_{\tau s}}_\tau d\tau.
    \end{equation}  
\end{proof}

The main work to derive \thmref{prop corr tr inv} is to control the derivative appearing on the r.h.s. of \eqref{ftc in prop}, namely
\[
\frac{d}{d t} \qmexp{\A_{t s}}_t=\qmexp{\partial_t\A_{t s}+[iH,\A_{ts}]}_t .
\]
This is the content of the next and final section.

\bigskip

\section{Proof of the main result}\label{sec proof}
In this section, we prove \thmref{prop corr tr inv}. We start by showing the following bound, displaying the necessary recursive structure. From now on, the constants can change from line to line, still remaining independent of $r,R$, the lattice size, and the total number of particles.

 \begin{prop}[Differential inequality]\label{prop diff ineq}
    Assume \eqref{ass CJ} holds for some $\alpha>3d+1$, and denote $\beta=\lfloor \alpha-d-1\rfloor$.
Consider $\chi\in\mathcal C$. Then there exist $C=C(\chi,\alpha,d,C_J)>0$ and $\tilde \chi\in\mathcal C$ such that for every $v>2\kappa$, $\psi_0\in\mathcal{D_T}$, and ${R>r>0}$ with $R-r>\max \Set{v,1}$, the following holds
\begin{align}\label{contr der w rem}
\qmexp{\partial_t\A_{t s}+[iH,\A_{ts}]}_t
\leq \frac{2\kappa-\tilde v}{s}\qmexp{\A'_{t s}}_t+C\frac{1}{s^2}\qmexp{\tilde \A'_{t s}}_t+\frac{C}{s^{\beta+1}}R^d\qmexp{n_0^2}_t.
\end{align}
Recall $\tilde v$ is as in \eqref{tilde v} and we define
\begin{align}
    \A'_{ts}&:=\sum_{x\in\Lambda}\chi'_{ts}(x) n_0 n_x,\notag\\
   \tilde \A'_{ts}&:=\sum_{x\in\Lambda}\tilde\chi'_{ts}(x) n_0 n_x.\notag
\end{align}

\end{prop}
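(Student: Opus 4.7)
My plan is an ASTLO commutator expansion adapted to the new quartic observable $\A_{ts}$. First, since $V$ is a function of the local number operators, $[V,\A_{ts}]=0$ and only the hopping part $H_{\mathrm{hop}}:=\sum_{x,y}J_{x,y}b_x^\dagger b_y$ contributes to $[iH,\A_{ts}]$. The chain rule immediately gives $\partial_t\A_{ts}=-(\tilde v/s)\A'_{ts}$, which already supplies the $-\tilde v/s$ half of the leading coefficient $(2\kappa-\tilde v)/s$ in \eqref{contr der w rem}. It thus suffices to establish
\[
\qmexp{[iH_{\mathrm{hop}},\A_{ts}]}_t \leq \frac{2\kappa}{s}\qmexp{\A'_{ts}}_t+\frac{C}{s^2}\qmexp{\tilde\A'_{ts}}_t+\frac{C}{s^{\beta+1}}R^d\qmexp{n_0^2}_t.
\]

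Using the CCR identity $[b_a^\dagger b_b,n_0 n_z]=(\delta_{b0}-\delta_{a0})\,n_z b_a^\dagger b_b + n_0(\delta_{bz}-\delta_{az})\,b_a^\dagger b_b + (\delta_{b0}-\delta_{a0})(\delta_{bz}-\delta_{az})\,b_a^\dagger b_b$ and summing against $\chi_{ts}(z)$, I would decompose $[iH_{\mathrm{hop}},\A_{ts}]=\mathrm{(I)}+\mathrm{(II)}+\mathrm{(III)}$, where $\mathrm{(I)}:=n_0[iH_{\mathrm{hop}},N_{\chi_{ts}}]$ with $N_{\chi_{ts}}:=\sum_x\chi_{ts}(x)n_x$, $\mathrm{(II)}:=N_{\chi_{ts}}[iH_{\mathrm{hop}},n_0]$, and $\mathrm{(III)}$ is a boundary correction supported on $\{a=0\}\cup\{b=0\}$. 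The key trick is that translation invariance of $\psi_t$, combined with $J_{a-z,b-z}=J_{a,b}$ and the radiality $\chi_{ts}(-u)=\chi_{ts}(u)$, yields $\qmexp{\mathrm{(II)}}_t=\qmexp{\mathrm{(I)}}_t$ \emph{exactly}: the index shift $\qmexp{n_z b_a^\dagger b_b}_t=\qmexp{n_0 b_{a-z}^\dagger b_{b-z}}_t$ followed by renaming $(u,v)=(a-z,-z)$ turns $\qmexp{\mathrm{(II)}}_t$ into the same expression as $\qmexp{\mathrm{(I)}}_t$. This translation-invariance identity is the source of the factor $2$ in the leading coefficient $2\kappa/s$.

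For $\qmexp{\mathrm{(I)}}_t$ I would apply the standard linear ASTLO commutator expansion from \cite{MR4470244,lemm2023microscopic,Lemm2025-ov}: a semiclassical Taylor expansion of $\chi_{ts}(b)-\chi_{ts}(a)$ together with \eqref{kappa} and the moment bound \eqref{kappa b} (with $\beta=\lfloor\alpha-d-1\rfloor$) yields
\[
\qmexp{\mathrm{(I)}}_t\leq \frac{\kappa}{s}\qmexp{\A'_{ts}}_t+\frac{C}{s^2}\qmexp{\tilde\A'_{ts}}_t+\frac{C}{s^{\beta+1}}\qmexp{n_0 N_{B_R}}_t,
\]
for a suitable $\tilde\chi'\in\cC$ accommodating the smoothed second-order Taylor remainder. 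The tail remainder is converted into the desired form via Cauchy-Schwarz and translation invariance: $\qmexp{n_0 n_z}_t\leq\sqrt{\qmexp{n_0^2}_t\qmexp{n_z^2}_t}=\qmexp{n_0^2}_t$, hence $\qmexp{n_0 N_{B_R}}_t\leq|B_R|\qmexp{n_0^2}_t\leq CR^d\qmexp{n_0^2}_t$. The boundary correction $\qmexp{\mathrm{(III)}}_t$ involves only one-body coherences $\qmexp{b_0^\dagger b_b}_t$ (bounded by $\sqrt{\qmexp{n_0^2}_t}$ via Cauchy-Schwarz) summed against $J_{0,b}(1-\chi_{ts}(b))$; since $1-\chi_{ts}$ is supported where $|b|\gtrsim R-\tilde v t-\epsilon s$, the $(\beta+1)$-th moment bound on $J$ combined with the assumption $R-r>\max\{v,1\}$ is used to show that $\qmexp{\mathrm{(III)}}_t$ fits into the $\frac{C}{s^{\beta+1}}R^d\qmexp{n_0^2}_t$ remainder.

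The main technical obstacle is the careful treatment of the boundary correction $\mathrm{(III)}$: unlike $\mathrm{(I)}$ and $\mathrm{(II)}$, it is not of quartic ASTLO form, and its naive tail estimate depends on the distance $R-\tilde v t-\epsilon s$ rather than on $s$ itself, so one has to exploit the support properties of $1-\chi_{ts}$, the geometric constraint $R-r>\max\{v,1\}$, and the translation-invariance Cauchy-Schwarz bound $|\qmexp{b_0^\dagger b_b}_t|\leq\sqrt{\qmexp{n_0^2}_t}$ in combination to funnel it into the claimed remainder. The remaining ingredients (the semiclassical Taylor bookkeeping of $\chi_{ts}(b)-\chi_{ts}(a)$, the construction of the fattened cutoff $\tilde\chi'\in\cC$, and pinning down the constant $C=C(\chi,\alpha,d,C_J)$) are routine adaptations of the prior linear ASTLO analyses.
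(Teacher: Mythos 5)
The overall structure you propose is sound and your decomposition is essentially the one used in the paper, although you normal-order eagerly and therefore split off the double commutator $(\mathrm{III})=[[iH_{\mathrm{hop}},n_0],N_{\chi_{ts}}]$ instead of absorbing the $\delta$-corrections into the analogues of $(\mathrm{I})$ and $(\mathrm{II})$. Your observation that $\qmexp{\mathrm{(I)}}_t=\qmexp{\mathrm{(II)}}_t$ exactly after the translation-invariance index shift is a nice simplification: the paper shows the two terms satisfy the same bound by redoing the whole symmetrized expansion on the shifted sum, whereas you get it for free. The treatments of $(\mathrm{I})$ and $(\mathrm{II})$, and the estimate $\qmexp{n_0 N_{B_R}}_t\le |B_R|\qmexp{n_0^2}_t$ for the tail, are correct.

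There is, however, a genuine gap in your treatment of $(\mathrm{III})$. You propose to bound $\sum_a|J_{a,0}|\,|1-\chi_{ts}(a)|$ via the support of $1-\chi_{ts}$, observing that it is contained in $\{|a|>R-\tilde v t-\epsilon s\}$, and then invoking the $(\beta+1)$-th moment of $J$ and the constraint $R-r>\max\{v,1\}$. This cannot work: since $R-\tilde v t-\epsilon s\geq r$ with equality at $t=s$, and since the proposition allows $r$ to be arbitrarily small (and the main theorem even allows $r=0$), the support restriction $\{|a|>r\}$ gives no decay in $s$ at all. A raw estimate then yields at best $|(\mathrm{III})|\lesssim \kappa\, s^{-1}\,\qmexp{n_0^2}_t$, and because $\beta\geq 2d$ one has $R^d s^{-(\beta+1)}\lesssim s^{d-\beta-1}\ll s^{-1}$ for $R\sim vs$, so the $s^{-1}$ term cannot be absorbed into the claimed remainder. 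The missing ingredient is the flatness of $\chi_{ts}$ at the origin: since $(R-\tilde v t)/s\geq \epsilon$ and $\mathrm{supp}\,\chi'\subset(\epsilon/2,\epsilon)$, all derivatives $\chi^{(k)}\bigl((R-\tilde v t)/s\bigr)=0$ for $k\geq 1$. Taylor expanding $\chi$ around $(R-\tilde v t)/s$ (or, equivalently, applying the symmetrized expansion \eqref{eq:symmexp} with $x=0$ and noting $u_{ts}(0)=u_{k,ts}(0)=0$, which is the observation \eqref{derivative chi} of the paper) gives
\begin{equation}
|1-\chi_{ts}(a)|\leq \frac{\|\chi^{(\beta+1)}\|_\infty}{(\beta+1)!}\,\frac{|a|^{\beta+1}}{s^{\beta+1}},\notag
\end{equation}
so that $\sum_a|J_{a,0}|\,|1-\chi_{ts}(a)|\leq C\kappa^{(\beta+1)}s^{-(\beta+1)}$, which funnels $(\mathrm{III})$ into the desired remainder. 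This is precisely how the paper handles the analogous $\delta_{y,0}$ corrections: they survive only in the order-$(\beta+1)$ remainder because $\chi'_{ts}(0)=j'_{k,ts}(0)=0$ kills them at every lower order. Without this step your estimate for $(\mathrm{III})$ does not close.
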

\begin{rmk}
    The assumption of translation invariance is necessary to prove \propref{prop diff ineq}, as it allows to recover the ASTLO structure. 
\end{rmk}

Combining \propref{prop geom prop} with \propref{prop diff ineq}, and subsequently repeating such a procedure, we obtain the following proposition.
\begin{prop}[Bootstrapping] \label{bootstrapping}
    In the same setting as of \propref{prop diff ineq}, there exists a constant $C=C(v,\alpha,d,C_J)$ such that, for every $\lambda>0$ and $\psi_0\in\mathcal{D_T^\lambda}$, the following holds
 \begin{align}\label{contr int astlo}
    \qmexp{n_0 N_{B_r}}_t\le R^d\lambda^2\del{1+Cs^{-1}}+\frac{Ct}{s^{\beta+1}}R^d\sup_{vt\le R-r}\qmexp{n_0^2}_t.
 \end{align}    
 
 \end{prop}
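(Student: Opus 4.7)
The plan is to combine the geometric bound from \propref{prop geom prop} with the differential inequality from \propref{prop diff ineq}, and then iterate the latter along a finite, carefully chosen nested sequence of cutoff functions. The standing constraint $vt \le R-r$, equivalently $t \le s$, is what lets each iteration contribute a bounded factor and keeps the argument self-consistent.

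First, I would substitute the differential inequality from \propref{prop diff ineq} into the FTC bound \eqref{ftc in prop}. The $\frac{CR^d}{s^{\beta+1}}\qmexp{n_0^2}_\tau$ term integrates immediately in $\tau$ to yield (after passing to a supremum in time) the claimed error $\frac{Ct R^d}{s^{\beta+1}}\sup_{vt\le R-r}\qmexp{n_0^2}_t$. It therefore remains to control the time integrals of $\qmexp{\A'_{\tau s}}_\tau$ and $\qmexp{\tilde\A'_{\tau s}}_\tau$, which appear with coefficients of order $s^{-1}$ and $s^{-2}$ respectively.

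Second, I would construct a finite family $\chi=\chi_0,\chi_1,\ldots,\chi_K\in\cC$ with strictly decreasing support parameters $\epsilon_k>0$, arranged so that $\mathrm{supp}\,\chi_k'$ and $\mathrm{supp}\,\tilde\chi_k'$ lie in the set where $\chi_{k+1}\equiv 1$. This yields the pointwise dominations $\chi_k',\tilde\chi_k' \le C\,\chi_{k+1}$ and hence the operator inequality $(\A^{(k)})'_{ts},(\tilde\A^{(k)})'_{ts}\le C\,\A^{(k+1)}_{ts}$, where $\A^{(k)}_{ts}:=\sum_{x\in\Lambda} \chi_k\!\left(\frac{R-\tilde v t-|x|}{s}\right) n_0 n_x$. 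Applying \propref{prop geom prop} and \propref{prop diff ineq} to each $\A^{(k)}$ produces the recursion
\[
\qmexp{\A^{(k)}_{ts}}_t \;\le\; R^d \lambda^2 \;+\; \frac{C}{s}\int_0^t \qmexp{\A^{(k+1)}_{\tau s}}_\tau \,d\tau \;+\; \frac{Ct R^d}{s^{\beta+1}}\sup_{\tau\le t}\qmexp{n_0^2}_\tau, \qquad 0\le k<K.
\]
The initial-value contribution $R^d\lambda^2$ comes from Cauchy--Schwarz and the controlled-density assumption \eqref{CD2}, exactly as in the proof of \propref{prop geom prop}.

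Third, I would unfold the recursion $K=\beta+1$ times. The nested time integrals assemble into factors $(Ct/s)^k/k!$; combined with $t\le s$, the $R^d\lambda^2$ contributions sum into $R^d\lambda^2(1+Cs^{-1})$ after absorbing the higher-order geometric tail into a single constant $C=C(v,\alpha,d,C_J)$, while the accumulated remainders stay of order $\frac{Ct R^d}{s^{\beta+1}}\sup\qmexp{n_0^2}$. The deepest contribution, proportional to $(C/s)^K$ times an iterated integral of $\qmexp{\A^{(K)}_{\tau s}}_\tau$, is absorbed into the error using the crude bound $\qmexp{\A^{(K)}_{\tau s}}_\tau \le C R^d \qmexp{n_0^2}_\tau$, which follows from Cauchy--Schwarz and translation invariance of $\psi_t$; for $K=\beta+1$ this tail is of the same order as the claimed error term.

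The main obstacle I anticipate is ensuring that the constants in the dominations $\chi_k'\le C\chi_{k+1}$ remain uniform as the support parameters $\epsilon_k$ shrink, and that the structural form of \propref{prop diff ineq} is preserved at each level $k$. The first is handled by fixing $K=\beta+1$ in advance, which depends only on $\alpha$ and $d$, so the entire family $\{\chi_k\}_{k=0}^K$ is chosen once and for all with uniformly bounded $\|\chi_k'\|_\infty$ and $\|\tilde\chi_k'\|_\infty$. The second relies essentially on translation invariance of $\psi_t$: it is what allowed the ASTLO recursion to close at level $0$ and it applies identically at every higher level, so the bootstrapping propagates without structural change.
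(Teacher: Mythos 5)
Your proposal replaces the paper's iteration scheme with a genuinely different one, and as written the quantitative claims do not follow. The paper's bootstrap does not need a nested family of cutoffs with shrinking support parameters $\epsilon_k$: it stays inside the single class $\cC\equiv\cC_\epsilon$ of \eqref{function class} and iterates on the derivative-ASTLO integral $s^{-1}\int_0^t\qmexp{\A'_{\tau s}}_\tau\,d\tau$, not on the ASTLOs $\qmexp{\A^{(k)}_{ts}}_t$ themselves. The key move is that after dropping $\qmexp{n_0 N_{B_r}}_t\ge 0$ in \eqref{eq:ftc}, the \emph{negative} coefficient in front of $\int_0^t\qmexp{\A'_{\tau s}}_\tau\,d\tau$ from \eqref{contr der w rem} is brought to the left-hand side; since the function $\tilde\chi$ supplied by \propref{prop diff ineq} is again in $\cC$, the resulting inequality applies to itself, and each application costs exactly one extra factor of $s^{-1}$ (because $\frac{C}{s^2}\int\qmexp{\tilde\A'}=\frac{C}{s}\cdot\frac{1}{s}\int\qmexp{\tilde\A'}$). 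That is where $v>2\kappa$ is used and where the leading coefficient $1+Cs^{-1}$ comes from. Your cutoffs $\chi_0,\dots,\chi_K$ with strictly decreasing $\epsilon_k$ are not elements of $\cC_\epsilon$, so you would at minimum have to re-verify \propref{prop geom prop} and \propref{prop diff ineq} for the enlarged family; the paper's route avoids this entirely.

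More seriously, your per-level recursion factor $\frac{C}{s}$ is too weak to yield \eqref{contr int astlo}. After $K$ unfoldings, the initial-value contributions assemble to $R^d\lambda^2\sum_{k\ge 0}(Ct/s)^k/k!=R^d\lambda^2\,e^{Ct/s}$; since the bound must hold for all $t\le s$, this is only $\le R^d\lambda^2\,e^{C}$, which is $O(1)\cdot R^d\lambda^2$ and \emph{not} $R^d\lambda^2(1+Cs^{-1})$. Likewise the deepest-level term, proportional to $(Ct/s)^K/K!\cdot\sup_\tau\qmexp{\A^{(K)}_{\tau s}}_\tau$, is only of order $R^d\sup\qmexp{n_0^2}$ under $t\le s$, not of order $s^{-(\beta+1)}R^d\sup\qmexp{n_0^2}$. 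The source of the error is that you dominate \emph{both} $(\A^{(k)})'_{ts}$ and $(\tilde\A^{(k)})'_{ts}$ by $C\,\A^{(k+1)}_{ts}$ and thereby give away the sign of the $\A'$-term, which is precisely the structure that the hypothesis on $v$ is there to provide. To repair your recursion you must drop the $\A'$-term outright (it is nonpositive), so that only $\frac{C}{s^2}\qmexp{(\tilde\A^{(k)})'}_t$ survives and the per-level factor becomes $\frac{C}{s^2}$; then $(Ct/s^2)^k\le(C/s)^k$ and both the initial-value sum $e^{C/s}\le 1+C's^{-1}$ and the deepest tail land at the right orders.
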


 To control the remainder term appearing in \eqref{contr int astlo}, we employ the following theorem, whose proof can be found in \cite{Lemm2025-ov}.
 \begin{thm}[\cite{lemm2023microscopic}*{Theorem 2.1}]\label{theo contr rem d+1}
Let $p\ge1$ be an integer and assume that \eqref{ass CJ} holds with 
				\begin{align}
					{\alpha>\max\{3dp/2+1,2d+1\}},\notag
				\end{align}
                and define the following quantity
                \begin{align*}
n:=\left\{ \begin{aligned} &\lfloor\alpha-d-1\rfloor,\quad &\textnormal{for }p=1, \\ &\lfloor\alpha-{\tfrac{dp}{2}}-1\rfloor,\quad          
                            &\textnormal{for } p\ge2.
				\end{aligned}\right.
			\end{align*}
				Then, for any ${v} > \kappa$
				{and $\delta_0>0$, there exists a positive constant ${C=C(\alpha,d, C_J, v,\delta_0,p)}$ such that for all $\lambda,\,r_2,\,r_1>0$ with $r_2-r_1>\max(\delta_0r_1,1)$  and initial states ${\psi_0\in\mathcal D(H_{\Lambda})\cap \cD(N^{p/2})}$ satisfying \eqref{CD2},}
				\begin{align*}
					\sup_{0\le{t}<(r_2-r_1)/v}\qmexp{N_{B_{r_1}}^p}_t\le {\left(1+ C(r_2-r_1)^{-1}\right)} {{\qmexp{N_{B_{r_2}}^p}_0}} + C(r_2-r_1)^{-n+dp}\lambda^p.
				\end{align*}
                
  \end{thm}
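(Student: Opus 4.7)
The plan is to prove the moment propagation estimate by constructing a particle-number ASTLO for $N_{B_{r_1}}$, raising it to the $p$th power, deriving a Grönwall-type differential inequality for its expectation, and running a multi-scale bootstrap. This follows the philosophy of \cite{MR4470244} and its long-range extension in \cite{lemm2023microscopic}. Crucially, this propagation estimate is quadratic in the bosonic operators (when $p=1$), so the ASTLO is of the ``standard'' type, simpler than the quartic correlation ASTLO introduced in \secref{sec ASTLO}.

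\textbf{Setup and geometry.} Fix $v>\kappa$, pick $\tilde v\in(\kappa,v)$, set $s:=(r_2-r_1)/v$, and choose $\chi\in\cC$ as in \eqref{function class} for some $\epsilon\in(0,v-\tilde v)$. Define $\chi_{ts}(x):=\chi\del{(r_1-|x|)/s+\tilde vt/s}$ and the ASTLO $B_{ts}:=\sum_{x\in\Lambda}\chi_{ts}(x)\,n_x$, arranged so that in analogy with \propref{prop geom prop} the geometric sandwich
\[
\qmexp{N_{B_{r_1}}^p}_t\le \qmexp{B_{ts}^p}_t,\qquad \qmexp{B_{0s}^p}_0\le\qmexp{N_{B_{r_2}}^p}_0
\]
holds for $t\le s$. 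By the fundamental theorem of calculus it then suffices to control $\tfrac{d}{dt}\qmexp{B_{ts}^p}_t=\qmexp{\partial_t B_{ts}^p+[iH,B_{ts}^p]}_t$.

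\textbf{Differential inequality.} Since $V=\Phi(\Set{n_x})$ commutes with every $n_x$, only the hopping contributes: $[iH_{\mathrm{hop}},B_{ts}]=\sum_{x,y}J_{x,y}(\chi_{ts}(y)-\chi_{ts}(x))\,i\,b_x^\dagger b_y$. Applying the Leibniz-type expansion $[iH,B_{ts}^p]=\sum_{j=0}^{p-1}B_{ts}^j[iH,B_{ts}]B_{ts}^{p-1-j}$ (all factors commute except for the hopping commutator, which is dealt with via CCR) and Taylor-expanding $\chi_{ts}(y)-\chi_{ts}(x)$ around $x$ up to order $K\le\beta$, one obtains after using the moments $\kappa^{(k+1)}$ from \eqref{kappa b}
\begin{align*}
\frac{d}{dt}\qmexp{B_{ts}^p}_t\le \frac{\kappa-\tilde v}{s}\qmexp{B'_{ts}B_{ts}^{p-1}}_t+\sum_{k=1}^{K-1}\frac{C}{s^{k+1}}\qmexp{\tilde B_{ts}^{(k)}B_{ts}^{p-1}}_t+\frac{C}{s^{K+1}}\qmexp{N\cdot B_{ts}^{p-1}}_t,
\end{align*}
where $\tilde B^{(k)}_{ts}$ are ASTLO-like operators with $\chi^{(k)}$ in place of $\chi$, and the final term collects the polynomial hopping tail controlled by $\kappa^{(K+1)}$. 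Since $\kappa-\tilde v<0$, the leading term is negative and can be discarded. The recursive structure closes because $B_{ts}$, $B'_{ts}$, $\tilde B^{(k)}_{ts}$, $N$ are all functions of $\Set{n_x}$ and hence mutually commute.

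\textbf{Bootstrap and main obstacle.} Dropping the negative term, integrating in $t$, and using the $t=0$ bound yields a first estimate in which the residual integrals of $\qmexp{\tilde B^{(k)}_{\tau s}B_{\tau s}^{p-1}}_\tau$ can be attacked by reapplying the same strategy to a slightly smaller sub-ASTLO, gaining a factor $1/s$ per iteration. After $n$ iterations the cumulative prefactor $s^{-n}$ combines with the terminal residual $\qmexp{N_{B_R}^p}_0\le (\lambda R^d)^p$ (by \eqref{CD2}), producing the claimed suppression $Cs^{-n+dp}\lambda^p$, while the $(1+C/s)\qmexp{N_{B_{r_2}}^p}_0$ term comes from the boundary value of $B_{0s}$ corrected by $O(1/s)$ from the smoothing of $\chi$. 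The main obstacle for $p\ge2$ is the residual coupling $\qmexp{N\cdot B_{ts}^{p-1}}_t$: a naive Cauchy-Schwarz would raise the boson moment to $2(p-1)>p$ and break the bootstrap. The fix is to apply Hölder with exponents tuned so that the final residual is exactly $\qmexp{N_{B_R}^p}_0$, at the cost of requiring roughly $dp/2$ additional Taylor orders to outweigh the $r^{dp}$ growth at initial time. This is precisely why the decay threshold strengthens from $\alpha>2d+1$ at $p=1$ to $\alpha>3dp/2+1$ for $p\ge 2$, and why the exponent $n$ degrades from $\lfloor\alpha-d-1\rfloor$ to $\lfloor\alpha-dp/2-1\rfloor$.
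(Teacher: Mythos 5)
First, a point of comparison: the paper does not prove this statement at all --- Theorem~\ref{theo contr rem d+1} is imported verbatim from \cite{lemm2023microscopic}*{Theorem 2.1} (with proof in \cite{Lemm2025-ov}), so your sketch must be measured against the ASTLO proofs in those references. You do follow their general philosophy (smoothed number-operator ASTLO, differential inequality, iteration in the order of the expansion), but as written the sketch has concrete gaps. The first is the setup itself: your cutoff $\chi_{ts}(x)=\chi\bigl((r_1-|x|)/s+\tilde vt/s\bigr)$ is centered at the \emph{inner} radius and \emph{grows} with $t$. Then the claimed sandwich fails: for small $t>0$ one has $\chi_{ts}\equiv 1$ only on $\{|x|\le r_1+\tilde vt-\epsilon s\}$, which does not contain $B_{r_1}$, so $\qmexp{N_{B_{r_1}}^p}_t\le\qmexp{B_{ts}^p}_t$ is false; moreover $\partial_t B_{ts}=+\tfrac{\tilde v}{s}B'_{ts}\ge 0$, so the time-derivative term has the \emph{wrong sign} and there is no negative leading term $\tfrac{\kappa-\tilde v}{s}\qmexp{B'_{ts}B_{ts}^{p-1}}_t$ to discard. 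The correct choice, exactly as in \propref{prop geom prop} and in the cited works, is the shrinking cutoff $\chi\bigl((r_2-\tilde vt-|x|)/s\bigr)$; this is fixable, but as stated your scheme does not start.

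The more substantial gaps are in the differential inequality and the remainder. For $p\ge 2$, asserting that the Leibniz terms $B_{ts}^j[iH,B_{ts}]B_{ts}^{p-1-j}$ are ``dealt with via CCR'' hides the main work: $[iH,B_{ts}]$ consists of off-diagonal terms $b_x^\dagger b_y$ that do \emph{not} commute with $B_{ts}$, so one must pull annihilation operators through powers of $B_{ts}$ (producing shifted cutoffs) and use the symmetrized expansion with weights $\sqrt{\chi'}$ at both endpoints before an operator Cauchy--Schwarz --- a pointwise Taylor expansion ``around $x$'' does not yield the symmetric structure needed; this is precisely where the thresholds $\alpha>3dp/2+1$ and $n=\lfloor\alpha-dp/2-1\rfloor$ come from, and your H\"older remark is a heuristic, not an argument. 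Worse, your remainder $\tfrac{C}{s^{K+1}}\qmexp{N\cdot B_{ts}^{p-1}}_t$ involves the \emph{total} number operator: bounding it via conservation gives $\qmexp{N^p}_0\sim(\lambda L^d)^p$, a volume-dependent bound that cannot produce $C(r_2-r_1)^{-n+dp}\lambda^p$. Getting a remainder supported in $B_{r_2}$ requires the a priori localization insertion $\mathbbm 1_{|x|\le R}+\mathbbm 1_{|y|\le R}$ inside the expansion (the trick the present paper itself borrows in \eqref{eq:symmexp}), which you never perform; and even then the localized remainder appears at intermediate times $\tau$, where $N_{B_{r_2}}$ is \emph{not} conserved --- silently replacing $\qmexp{N_{B_{r_2}}^p}_\tau$ by its time-zero value under \eqref{CD2} is essentially assuming the statement being proved. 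In \cites{lemm2023microscopic,Lemm2025-ov} this circularity is broken by an additional layer (multiscale covering/self-consistent iteration over scales), which is missing from your proposal.
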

  Having prepared all necessary ingredients, we can finally prove \thmref{prop corr tr inv}.
  \begin{proof}[Proof of \thmref{prop corr tr inv}]
    We remark that the Hamiltonian and the initial state satisfy the required assumptions. Then, to bound the remainder term appearing in \eqref{contr int astlo}, we can apply \thmref{theo contr rem d+1} for $p=2$, $\delta_0=1$, $r_1=1$ and  ${r_2=R-r+1}$. Thus, for ${vt\le R-r}$, it holds 
    \begin{align}\label{bound rem w theo}
       \qmexp{n_0^2}_t &\le \qmexp{n_{B_1}^2}_t\notag\\
       &\le{\left(1+ C(R-r)^{-1}\right)} {{\qmexp{N_{B_{R-r+1}}^2}_0}} + C(R-r)^{-\beta+2d}\lambda^2\notag\\
        &\le C \lam^2\del{R-r}^{2d}.
    \end{align}
    In the first inequality, we applied \thmref{theo contr rem d+1}, and in the second, we used \eqref{CD2}.
   We apply \eqref{bound rem w theo} to control the remainder term appearing in \eqref{contr int astlo} 
    \begin{align}
        \qmexp{n_0 N_{B_r}}_t&\le R^d\lambda^2\del{1+\frac{C}{s}}+\frac{C t}{s^{\beta+1}}R^d\sup_{vt\le R-r}\qmexp{n_0^2}_t\notag\\
        &\le R^d\lam^2\del{1+\frac{C}{s}}+\frac{C\lam^2t}{s^{\beta+1}}R^d\del{R-r}^{2d}.\notag
    \end{align}
    Recalling the fact that  $s=(R-r)/v$ and $t\le s$, we obtain \eqref{stat corr tr inv}.
    \end{proof}

\begin{proof}[Proof of \propref{prop diff ineq}]
The chain rule implies that $
\partial_t\A_{t s}=-\frac{\tilde v}{s}\A'_{t s}$, so it suffices to prove
\begin{equation}
    \qmexp{[iH,\A_{ts}]}_t\leq \frac{2\kappa}{s}\qmexp{\A'_{t s}}_t+C\frac{1}{s^2}\qmexp{\tilde \A'_{t s}}_t+Cs^{-{\beta+1}}R^d\qmexp{n_0^2}_t.\notag
\end{equation}
We calculate, using the CCR,

\begin{align}\label{I + II}
     & [iH,\A_{ts}]\notag\\
 =& -i \sum_{x,y\in\Lambda}J_{x,y}[b_x^\dagger b_y,\A_{ts}]\notag\\
 =&-i \sum_{x,y\in\Lambda}J_{x,y}\sum_{z\in\Lambda}
\chi_{ts}(z) [b_x^\dagger b_y,n_zn_0]\notag\\
 =&-i \sum_{x,y\in\Lambda}J_{x,y}\sum_{z\in\Lambda}
\chi_{ts}(z) 
\left([b_x^\dagger b_y,n_z]n_0+n_z[b_x^\dagger b_y,n_0]
\right)\notag\\
 =&-i \sum_{x,y\in\Lambda}J_{x,y}\sum_{z\in\Lambda}
\chi_{ts}(z) 
\left((\delta_{y,z} b_x^\dagger b_z-\delta_{z,x}b_z^\dagger b_y)n_0+n_z (\delta_{y,0} b_x^\dagger b_0-\delta_{0,x}b_0^\dagger b_y)
\right)\notag\\
 =&-i \Big(\underbrace{\sum_{z\in\Lambda}\chi_{ts}(z) \sum_{x\in\Lambda} J_{x,z} (b_x^\dagger b_z-b_z^\dagger b_x)n_0}_{=:\mathrm{(I)}}+\underbrace{\sum_{z\in\Lambda}\chi_{ts}(z)\sum_{x\in\Lambda\setminus\Set{0}}J_{x,0}n_z ( b_x^\dagger b_0-b_0^\dagger b_x)}_{=:\mathrm{(II)}}
\Big),
\end{align}
where in the last step we relabeled the sums suitably and the fact that $J_{x,y}=J_{y,x}$ for every $x,y\in\Lambda$. Terms $\mathrm{(I)}$ and $\mathrm{(II)}$ have to be treated rather differently.\\

\underline{Term $\mathrm{(I)}$.}
We can relabel term $\mathrm{(I)}$ to produce differences of $\chi_{ts}$, which is essential to run the ASTLO method. Namely,
\[
\begin{aligned}
\mathrm{(I)}
=\sum_{y\in\Lambda} \chi_{ts}(y) \sum_{x\in\Lambda}  J_{x,y}(b_x^\dagger b_y-b_y^\dagger b_x)n_0
=\sum_{x,y\in\Lambda} J_{x,y}(\chi_{ts}(x)-\chi_{ts}(y))b_x^\dagger b_yn_0
\end{aligned}
\]
First, to prepare for applying Cauchy-Schwarz later, we move the $n_0$ into the center up to a commutator, which will be sub-leading for our purposes:
\begin{equation}\label{eq:centern}
b_x^\dagger b_yn_0=
b_x^\dagger n_0 b_y+\delta_{y,0} b_x^\dagger b_0.
\end{equation}
This gives
\begin{equation}\label{eq:Irewrite}
    \begin{aligned}
\mathrm{(I)}
=\sum_{x,y\in\Lambda} J_{x,y}(\chi_{ts}(x)-\chi_{ts}(y))(b_x^\dagger n_0b_y+\delta_{y,0} b_x^\dagger b_0).
\end{aligned}
\end{equation}
A key ingredient of the ASTLO approach is the symmetrized expansion (\cite{MR4470244}*{Lemma 2.2}) with (a refinement of) the a priori localization trick from \cite{lemm2023microscopic}* {Eq.(4.35)--(4.36)}.
More precisely, notice that $\chi_{ts}(x)-\chi_{ts}(y)\neq 0$ implies that $x$ or $y$ lie in $\mathrm{supp}\, \chi_{ts}\subset B_{R}$.
Then, for every $\beta$, there exist $j_2,\ldots,j_\beta\in\mathcal C$ such that we have the symmetrized expansion
\begin{equation}\label{eq:symmexp}
\begin{aligned}
&|\chi_{ts}(x)-\chi_{ts}(y)|\\
\leq& |\chi_{ts}(x)-\chi_{ts}(y)| \del{\mathbbm 1_{|x|\leq R }+\mathbbm 1_{|y|\leq R}}\\
\leq& 
\frac{|x-y|}{s} u_{ts}(x)u_{ts}(y)+\sum_{k=2}^\beta\frac{|x-y|^k}{s^k} C_{\chi,k}
u_{k,ts}(x)u_{k,ts}(y)\\
&+C_{\chi,\beta+1}\frac{|x-y|^{\beta+1}}{s^{\beta+1}}  \del{\mathbbm 1_{|x|\leq R }+\mathbbm 1_{|y|\leq R}},
\end{aligned}
\end{equation}
where the sum should be dropped for $\beta=1$, $\mathbbm 1_X$ is the characteristic function of $X\subset\Lambda$, and
\[
u_{ts}=(\sqrt{\chi'})_{ts},\qquad u_{k,ts}\equiv (u_k)_{ts}=(\sqrt{j_k'})_{ts},\quad (k=2,\ldots,\beta).
\]
Additionally, for ease of notation, we write $(\chi')_{ts}$ and $(j'_k)_{ts}$  as $\chi'_{ts}$ and $j'_{k,ts}$ respectively.
We test \eqref{eq:Irewrite} on the vector $\psi_t$ and apply the above expansion to obtain
\begin{equation}\label{eq:IpostCS}
    \begin{aligned}
    \qmexp{\mathrm{(I)}}_t
\leq& \frac{1}{s}\sum_{x,y\in\Lambda} \abs{J_{x,y}}\abs{x-y}u_{ts}(x)u_{ts}(y)\left(|\qmexp{b_x^\dagger n_0b_y}_t|+\delta_{y,0}|\qmexp{b_x^\dagger b_y}_t|\right)\\
&+C\sum_{k=2}^\beta s^{-k} \sum_{x,y\in\Lambda} \abs{J_{x,y}}\abs{x-y}^ku_{k,ts}(x)u_{k,ts}(y)\left(|\qmexp{b_x^\dagger n_0b_y}_t|+\delta_{y,0}|\qmexp{b_x^\dagger b_y}_t|\right)\\
&+C s^{-\beta-1}\mathrm{Rem'}\\
=& \frac{1}{s}\sum_{x,y\in\Lambda} \abs{J_{x,y}}\abs{x-y}u_{ts}(x)u_{ts}(y)|\qmexp{b_x^\dagger n_0b_y}_t|\\
&+C\sum_{k=2}^\beta s^{-k} \sum_{x,y\in\Lambda}\abs{J_{x,y}}\abs{x-y}^k u_{k,ts}(x)u_{k,ts}(y)|\qmexp{b_x^\dagger n_0b_y}_t|\\
&+C s^{-\beta-1}\mathrm{Rem'},
\end{aligned}
\end{equation}
with
\begin{align}\label{long rem}
  \mathrm{Rem'}:=  \sum_{x,y\in \Lambda} \del{\mathbbm 1_{|x|\leq R }+\mathbbm 1_{|y|\leq R}}\;\abs{J_{x,y}}\abs{x-y}^{\beta+1}\left(|\qmexp{b_x^\dagger n_0b_y}_t|+\delta_{y,0}|\qmexp{b_x^\dagger b_y}_t|\right).
\end{align}
Notice that in the second step of \eqref{eq:IpostCS}, we used that $\chi'_{ts}(0)=j_{k,ts}'(0)=0$ for $\chi,j_k\in\mathcal C$.
Recall
\begin{align}\label{bnb le nn}
    b_x^\dagger n_0b_x\leq n_xn_0.
\end{align} 
By applying subsequently  Cauchy-Schwarz, \eqref{bnb le nn}, $\chi'_{ts}(0)=j_{k,ts}'(0)=0$, and Cauchy-Schwarz again, we can control the first summand in \eqref{eq:IpostCS} as follows
\begin{align}\label{first I ok}
  \frac{1}{s}\sum_{x,y\in\Lambda}& \abs{J_{x,y}}\abs{x-y}u_{ts}(x)u_{ts}(y)|\qmexp{b_x^\dagger n_0b_y}_t|\notag\\
  &\le  \frac{1}{s}\sum_{x,y\in\Lambda} \abs{J_{x,y}}\abs{x-y}u_{ts}(x)u_{ts}(y)\sqrt{\qmexp{b_x^\dagger n_0b_x}_t\qmexp{b_y^\dagger n_0b_y}_t}\notag\\
  &\le\frac{1}{s}\sum_{x,y\in\Lambda} \del{\abs{J_{x,y}}\abs{x-y}\chi'_{ts}(x)\qmexp{n_xn_0}_t}^{1/2}\del{\abs{J_{x,y}}\abs{x-y}\chi'_{ts}(y)\qmexp{n_yn_0}_t}^{1/2}\notag\\
  &\le \frac{\kappa}{s}\sum_{x} \chi'_{ts}(x)\qmexp{n_xn_0}_t.
\end{align}
Following the same strategy  we can control the second summand in \eqref{eq:IpostCS}
\begin{align}\label{second I almost ok}
    \sum_{k=2}^\beta s^{-k} &\sum_{x,y\in\Lambda}\abs{J_{x,y}}\abs{x-y}^k u_{k,ts}(x)u_{k,ts}(y)|\qmexp{b_x^\dagger n_0b_y}_t|\notag \\
    &\le \sum_{k=2}^\beta \frac{\kappa^{(k)}}{s^{k}} \sum_{x} j'_{k,ts}(x)\qmexp{n_xn_0}_t.
\end{align}
Recall in \eqref{kappa b} we defined $\kappa^{(k)}:= \sup_{x\in\Lambda} \sum_{y\in\Lambda}\abs{J_{x,y}}\abs{x-y}^k$. 
Let us denote
\[
\tilde{\A}'_{k,ts}:=\sum_{x\in\Lambda_L}j'_{k,ts}(x) n_0 n_x,
\]
and recall the following property  of $\tilde \cE$,
\begin{align}\label{prop f in C}
    \forall \;f_1,\,f_2\in\cC\quad\exists f_3\in\cC \text{ and }\tilde C>0 \text{ such that }f_1+f_2\le \tilde C f_3.
\end{align}
Then, for $s>1$, it follows
\begin{align}\label{secon I ok}
   \sum_{k=2}^\beta \frac{\kappa^{(k)}}{s^{k}} \sum_{x} {j_{k,ts}'}(x)\qmexp{n_xn_0}_t = \sum_{k=2}^\beta\frac{\kappa^{(k)}}{s^{k}} \qmexp{\tilde{\A}'_{k,ts}}_t\le Cs^{-2} \qmexp{\tilde{\A}'_{ts}}_t,
\end{align}
for a suitably defined $\tilde\chi\in\mathcal C$.
Combining \eqref{eq:IpostCS} together with \eqref{first I ok}, \eqref{second I almost ok}, and \eqref{secon I ok} yields
\begin{equation}\label{I astlo}
    \begin{aligned}
  \qmexp{\mathrm{(I)}}_t
\leq&\frac{\kappa}{s}\qmexp{\mathbb A'_{ts}}_t+Cs^{-2} \qmexp{\tilde{\A}'_{ts}}_t+C s^{-\beta-1}\mathrm{Rem'}.
\end{aligned}
\end{equation}
 Notice that for $\beta=1$, since we would drop the sum over $k$, the second summand of the r.h.s. of \eqref{I astlo} has to be dropped.

Let us now analyze the remainder term $\mathrm{Rem'}$ defined as in \eqref{long rem}. We apply Cauchy-Schwarz once again to obtain
\begin{align}\label{eq rem 1}
    \mathrm{Rem'}:=&  \sum_{x,y\in \Lambda} \del{\mathbbm 1_{|x|\leq R }+\mathbbm 1_{|y|\leq R}}\;\abs{J_{x,y}}\abs{x-y}^{\beta+1}\left(|\qmexp{b_x^\dagger n_0b_y}_t|+\delta_{y,0}|\qmexp{b_x^\dagger b_y}_t|\right)\notag\\
    \le& \sum_{x,y\in \Lambda}\del{ \mathbbm 1_{|x|\leq R }+\mathbbm 1_{|y|\leq R}}\;\abs{J_{x,y}}\abs{x-y}^{\beta+1}\left(\sqrt{\qmexp{n_x n_0}_t\qmexp{n_y n_0}_t}+\delta_{y,0}\sqrt{\qmexp{n_x}_t\qmexp{n_y}_t}\right)\notag\\
    \le& \frac{1}{2}\sum_{x,y\in \Lambda}\del{ \mathbbm 1_{|x|\leq R }+\mathbbm 1_{|y|\leq R}}\;\abs{J_{x,y}}\abs{x-y}^{\beta+1}\left({\qmexp{n_x n_0}_t+\qmexp{n_y n_0}_t}+\delta_{y,0}\del{\qmexp{n_x}_t+\qmexp{n_y}_t}\right).
  \end{align}
From inequality \eqref{eq rem 1} we obtain
\begin{align}\label{A and B}
  \mathrm{Rem'} \le    \underbrace{\sum_{x\in B_R,\,y\in \Lambda}\abs{J_{x,y}}\abs{x-y}^{\beta+1}\del{{\qmexp{n_x n_0}_t+\qmexp{n_y n_0}_t}}}_{\Mr{(A)}}+\frac{1}{2}\underbrace{\sum_{x\in\Lambda}\abs{J_{x,0}}\abs{x}^{\beta+1}\del{\qmexp{n_x}_t+\qmexp{n_0}_t}}_{\Mr{(B)}}.
\end{align}
Thanks to translation invariance, we can control term (B) in \eqref{A and B} as follows, 
\begin{align}
    {\Mr{(B)}}=& \qmexp{n_0}_t\sum_{x\in\Lambda}\abs{J_{x,0}}\abs{x}^{\beta+1}\le \qmexp{n_0}_t\kappa^{(\beta+1)}.\label{eq rem 2}
\end{align}
 Now let us focus on term (A) in \eqref{A and B}. Applying translation invariance and Cauchy-Schwarz leads to
\begin{align}\label{eq rem 3}
    {\Mr{(A)}}&\le \sum_{x\in B_R,\,y\in \Lambda}\abs{J_{x,y}}\abs{x-y}^{\beta+1}\del{\sqrt{\qmexp{n_x^2}_t\qmexp{n_0^2}_t}+\sqrt{\qmexp{n_y^2}_t\qmexp{n_0^2}_t}}\notag\\
    &= 2\qmexp{n_0^2}_t \sum_{x\in B_R,\,y\in \Lambda}\abs{J_{x,y}}\abs{x-y}^{\beta+1}\notag\\
    &\le 2\kappa^{(\beta+1)}\abs{B_R}\qmexp{n_0^2}_t.
\end{align}
  Combining \eqref{eq rem 2} and \eqref{eq rem 3}, we obtain
  \begin{align}
    \mathrm{Rem'} \le& \kappa^{(\beta+1)} \del{\qmexp{n_0}_t+2\abs{B_R}\qmexp{n_0^2}_t}\notag\\
    \le& 3\kappa^{(\beta+1)}\abs{B_R}\qmexp{n_0^2}_t.\notag
  \end{align}
  Applying this bound for the remainder to \eqref{I astlo} yields
  \begin{align}\label{I astlo no rem}
    \qmexp{\mathrm{(I)}}_t
\leq&\frac{\kappa}{s}\qmexp{\mathbb A'_{ts}}_t+Cs^{-2} \qmexp{\tilde{\A}'_{ts}}_t+\frac{C} {s^{\beta+1}}\kappa^{(\beta+1)}\abs{B_R}\qmexp{n_0^2}_t.
  \end{align}
\underline{Term $\mathrm{(II)}$.} We consider the term

\begin{align*}
\qmexp{\mathrm{(II)}}_t=&
\sum_{y\in\Lambda} \chi_{ts}(y)\sum_{x\in\Lambda\setminus\Set{0}}J_{x,0} \qmexp{n_y ( b_x^\dagger b_0-b_0^\dagger b_x)}_t\\
&=\sum_{y\in\Lambda} \chi_{ts}(y)\sum_{a=1}^{L/2}\sum_{j=1}^d\sum_{\sigma\in\{\pm 1\}}J_{a\sigma e_j,0}
\left(\qmexp{n_y  b_{-a\sigma e_j}^\dagger b_0}_t-\qmexp{n_y b_0^\dagger b_{a\sigma e_j}}_t\right)\\
&=\sum_{y\in\Lambda} \chi_{ts}(y)\sum_{a=1}^{L/2}\sum_{j=1}^d\sum_{\sigma\in\{\pm 1\}}J_{a\sigma e_j,0}
\left(\qmexp{n_{y+a\sigma e_j}  b_{0}^\dagger b_{a\sigma e_j}}_t-\qmexp{n_{y} b_{0}^\dagger b_{a\sigma e_j}}_t\right)\\
&=\sum_{y\in\Lambda}\sum_{x\in\Lambda\setminus\Set{0}} J_{x,0} \left(\chi_{ts}(y-x)-\chi_{ts}(y)\right)
\qmexp{n_y  b_{0}^\dagger b_{x}}_t,\\
\end{align*}
where we used translation-invariance in the second-to-last step.
Notice that the assumption of translation invariance, together with the symmetry of the Hamiltonian, implied 
\begin{align}
   J_{-a\sigma e_j,0}=J_{0,a\sigma e_j}=J_{a\sigma e_j,0}.\notag
\end{align}
As in \eqref{eq:centern}, we move the $n$-operator to the center:
\[
n_y  b_{0}^\dagger b_{x}=  b_{0}^\dagger n_yb_{x}+\delta_{y,0} b_0^\dagger b_{x}
\]
which gives
\begin{align}
\qmexp{\mathrm{(II)}}_t
&=\sum_{y\in\Lambda}\sum_{x\in\Lambda\setminus\Set{0}} J_{x,0} \left(\chi_{ts}(y-x)-\chi_{ts}(y)\right)
\left(\qmexp{  b_{0}^\dagger n_y b_{x}}_t+\delta_{y,0} \qmexp{b_0^\dagger b_{x}}_t\right).\notag
\end{align}
Now we are again in a position to perform the symmetrized expansion \eqref{eq:symmexp},
\begin{align}
&|\chi_{ts}(y)-\chi_{ts}(y-x)|\notag\\
\leq& 
\frac{1}{s} u_{ts}(y)u_{ts}(y-x)\abs{x}+\sum_{k=2}^\beta\frac{1}{s^k} C_{\chi,k}
u_{k,ts}(y)u_{k,ts}(y-x)\abs{x}^k\notag\\
&+C_{\chi,\beta+1}\frac{1}{s^{\beta+1}} \abs{x}^{\beta+1}\del{\mathbbm 1_{|y|\leq R }+\mathbbm 1_{|y-x|\leq R}}.\notag
\end{align}
Since $\chi'(0)=j_k'(0)=0$, we obtain
\begin{align}
\qmexp{\mathrm{(II)}}_t
&\leq \frac{1}{s}\sum_{y\in\Lambda}\sum_{x\in\Lambda\setminus\Set{0}} \abs{J_{x,0}}\abs{x}u_{ts}(y)u_{ts}(y-x)
\left|\qmexp{  b_{0}^\dagger n_y b_{x}}_t\right|\notag
\\
&+
C \sum_{k=2}^\beta\frac{1}{s^k}\sum_{y\in\Lambda} \sum_{x\in\Lambda\setminus\Set{0}}\abs{J_{x,0}}\abs{x}^ku_{k,ts}(y)u_{k,ts}(y-x)
\left|\qmexp{  b_{0}^\dagger n_y b_{x}}_t\right|\notag
\\
&+ \frac{C}{s^{\beta+1}}\sum_{y\in \Lambda} \sum_{x\in\Lambda\setminus\Set{0}}\del{\mathbbm 1_{|y|\leq R }+\mathbbm 1_{|y-x|\leq R}}\abs{J_{x,0}}\abs{x}^{\beta+1}
\left(\left|\qmexp{  b_{0}^\dagger n_y b_{x}}_t\right|
+\delta_{y,0} \left|\qmexp{b_0^\dagger b_{x}}_t\right|\right).\notag
\end{align}
Applying Cauchy-Schwarz and \eqref{bnb le nn} we derive the following
\begin{align}\label{bxnby le nn nn}
    \qmexp{  b_{0}^\dagger n_y b_{x}}_t&\le \sqrt{\qmexp{b_{0}^\dagger n_{y} b_{0}}_t \qmexp{b_{x}^\dagger n_{y} b_{x}}_t}\notag\\
    &\le \sqrt{\qmexp{n_0 n_y}_t \qmexp{n_{x} n_{y} }_t}.
\end{align}
Thanks to \eqref{bxnby le nn nn} and $\chi'(0)=j_k'(0)=0$, we have
\begin{align}\label{II w nn}
    \qmexp{\mathrm{(II)}}_t
&\leq \frac{1}{s}\sum_{y\in\Lambda}\sum_{x\in\Lambda\setminus\Set{0}} \abs{J_{x,0}}\abs{x}u_{ts}(y)u_{ts}(y-x)
\sqrt{\qmexp{n_0 n_y}_t \qmexp{n_{x} n_{y} }_t}\notag\\
&+
C \sum_{k=2}^\beta\frac{1}{s^k}\sum_{y\in\Lambda} \sum_{x\in\Lambda\setminus\Set{0}}\abs{J_{x,0}}\abs{x}^ku_{k,ts}(y)u_{k,ts}(y-x)
\sqrt{\qmexp{n_0 n_y}_t \qmexp{n_{x} n_{y} }_t}\notag
\\
&+  \frac{C}{s^{\beta+1}}\mathrm{Rem''},
\end{align}
where we defined
\begin{align}\label{bound on rem''}
    \mathrm{Rem''}:=\sum_{\substack{y\in \Lambda\\x\in\Lambda\setminus\Set{0}}}\del{\mathbbm 1_{|y|\leq R }+\mathbbm 1_{|y-x|\leq R}}\abs{J_{x,0}}\abs{x}^{\beta+1}
    \left(\sqrt{\qmexp{n_0 n_y}_t \qmexp{n_{x} n_{y} }_t}
    +\delta_{y,0} \sqrt{\qmexp{n_0}_t\qmexp{n_{x}}_t}\right).
\end{align}
We apply  Cauchy-Schwarz for numbers to the first row of \eqref{II w nn} to obtain
\begin{align}
    & \frac{1}{s}\sum_{y\in\Lambda}\sum_{x\in\Lambda\setminus\Set{0}} \abs{J_{x,0}}\abs{x}u_{ts}(y)u_{ts}(y-x)
\sqrt{\qmexp{n_0 n_y}_t \qmexp{n_{x} n_{y} }_t}\notag\\
    = & \frac{1}{s}\sum_{y\in\Lambda}\sum_{x\in\Lambda\setminus\Set{0}} \del{\abs{J_{x,0}}\abs{x}\chi'_{ts}(y){\qmexp{n_0 n_y}_t}}^{1/2}\del{ \abs{J_{x,0}}\abs{x}\chi'_{ts}(y-x){\qmexp{n_{x} n_{y} }_t}}^{1/2}\notag\\
    \le & \frac{1}{2s}\sum_{y\in\Lambda}\sum_{x\in\Lambda\setminus\Set{0}} \del{\abs{J_{x,0}}\abs{x}\chi'_{ts}(y){\qmexp{n_0 n_y}_t}+ \abs{J_{x,0}}\abs{x}\chi'_{ts}(y-x){\qmexp{n_{x} n_{y} }_t}}.\label{II a before A}
\end{align}
The first term in \eqref{II a before A} can be reduced to the ASTLOs as follows
\begin{align}\label{II a astlo 1}
    \sum_{y\in\Lambda}\sum_{x\in\Lambda\setminus\Set{0}} \abs{J_{x,0}}\abs{x}\chi'_{ts}(y){\qmexp{n_0 n_y}_t}&=  \sum_{y\in\Lambda} \chi'_{ts}(y){\qmexp{n_0 n_y}_t}\sum_{x\in\Lambda\setminus\Set{0}} \abs{J_{x,0}}\abs{x}\notag\\
    &\le \kappa \qmexp{\bA'_{ts}}_t.
\end{align}
We can treat the second term in \eqref{II a before A} similarly, after using the assumption of translation invariance and rescaling the sum.

\begin{align}\label{II a aslto 2}
    &\sum_{y\in\Lambda}\sum_{x\in\Lambda\setminus\Set{0}} \abs{J_{x,0}}\abs{x}\chi'_{ts}(y-x){\qmexp{n_x n_y}_t}\notag\\
   = &\sum_{y\in\Lambda}\sum_{x\in\Lambda\setminus\Set{0}} \abs{J_{x,0}}\abs{x}\chi'_{ts}(y-x){\qmexp{n_0 n_{y-x}}_t}\notag\\
    = &\sum_{z\in\Lambda}\chi'_{ts}(z){\qmexp{n_0 n_{z}}_t} \sum_{x\in\Lambda\setminus\Set{0}}\abs{J_{x,0}}\abs{x}\notag\\
    \le&\kappa \qmexp{\bA'_{ts}}_t.
\end{align}
Using the same strategy, we can bound the second row of \eqref{II w nn} and obtain 
\begin{align}\label{II b astlo}
   &  \sum_{k=2}^\beta\frac{1}{s^k}\sum_{y\in\Lambda} \sum_{x\in\Lambda\setminus\Set{0}}\abs{J_{x,0}}\abs{x}^ku_{k,ts}(y)u_{k,ts}(y-x)
\sqrt{\qmexp{n_0 n_y}_t \qmexp{n_{x} n_{y} }_t}\notag\\
\le& \sum_{k=2}^\beta\frac{\kappa^{(k)}}{s^k}\qmexp{\tilde\bA'_{k,ts}}_t.
\end{align}
Combining \eqref{II w nn} together with \eqref{II a astlo 1}, \eqref{II a aslto 2}, and \eqref{II b astlo} leads to
\begin{align}
    \qmexp{\mathrm{(II)}}_t
&\leq \frac{\k}{s} \qmexp{\bA'_{ts}}_t+C\sum_{k=2}^\beta\frac{1}{s^k}\qmexp{\tilde\bA'_{k,ts}}_t+\frac{C}{s^{\beta+1}}\mathrm{Rem''}.\notag
\end{align}
Thanks to  property \eqref{prop f in C} of the function class $\cC$, defined in \eqref{function class}, and the fact $s>1$, we obtain
\begin{align}\label{II after boot still rem}
    \qmexp{\mathrm{(II)}}_t
    &\leq \frac{\kappa}{s} \qmexp{\bA'_{ts}}_t+\frac{C}{s^2}\qmexp{\tilde\bA'_{ts}}_t+\frac{C}{s^{\beta+1}}\mathrm{Rem''},
\end{align}
for a certain $\tilde\chi\in\cC$. 
Thanks to translation invariance and Cauchy-Schwarz we can control the remainder in \eqref{II after boot still rem}, defined in \eqref{bound on rem''}, as
\begin{align}
    \mathrm{Rem''} &\le \sum_{y\in \Lambda} \sum_{x\in\Lambda\setminus\Set{0}}\del{\mathbbm 1_{|y|\leq R }+\mathbbm 1_{|y-x|\leq R}}\abs{J_{x,0}}\abs{x}^{\beta+1}\del{\delta_{y,0}\qmexp{n_0}_t+ \qmexp{n_0}_t^2}\\
    &\le \sum_{x\in\Lambda\setminus\Set{0}} \abs{J_{x,0}}\abs{x}^{\beta+1}\del{\qmexp{n_0}_t+\qmexp{n_0}_t^2\del{\sum_{y\in\Lambda}\mathbbm 1_{|y|\leq R }+\sum_{y\in\Lambda}\mathbbm 1_{|y-x|\leq R }}}\\
    &\le 3 \abs{B_R}\qmexp{n_0}_t^2\sum_{x\in\Lambda\setminus\Set{0}} \abs{J_{x,0}}\abs{x}^{\beta+1}\\
    &\le 3\kappa^{(\beta+1)}\abs{B_R}\qmexp{n_0}_t^2.
\end{align}
All in all, we obtain
\begin{align}\label{II good rem}
    \qmexp{\mathrm{(II)}}_t
    &\leq \frac{\kappa}{s} \qmexp{\bA'_{ts}}_t+\frac{C}{s^2}\qmexp{\tilde\bA'_{ts}}_t+C\frac{\kappa^{(\beta+1)}}{s^{\beta+1}}\abs{B_R}\qmexp{n_0^2}_t.
\end{align}
Now consider again \eqref{I + II}. We can control its r.h.s. applying \eqref{I astlo no rem} and \eqref{II good rem}.
\begin{align}\label{control comm }
    \sbr{iH,\bA_{ts}}&\le \del{\abs{I}+\abs{II}}\notag\\
    &\le \frac{2\kappa}{s}\qmexp{\mathbb A'_{ts}}_t+\frac{C}{s^{2}} \qmexp{\tilde{\A}'_{ts}}_t+\frac{C}{s^{\beta+1}}\abs{B_R}\kappa^{(\beta+1)}\qmexp{n_0^2}_t\notag\\
    &\le \frac{2\kappa}{s}\qmexp{\mathbb A'_{ts}}_t+\frac{C}{s^{2}} \qmexp{\tilde{\A}'_{ts}}_t+\frac{C}{s^{\beta+1}}R^d\qmexp{n_0^2}_t.
\end{align}
  Notice that the second-order term in \eqref{control comm } should be dropped for $\beta=1$. This concludes the proof.

\end{proof}

\begin{proof}[Proof of \propref{bootstrapping}]

The idea for obtaining the desired estimate is to combine \eqref{contr der w rem} and \eqref{eq:ftc} with a bootstrapping strategy. 
Let us start considering \eqref{eq:ftc} and bounding the derivative appearing on the r.h.s. using \eqref{contr der w rem}. 
    \begin{align}\label{ftc + contr der}
        \qmexp{n_0 N_{B_r}}_t&\le R^d\lambda^2+\int_0^t \frac{d}{d\tau}\qmexp{\A_{\tau s}}_\tau d\tau\\
        &\le R^d\lambda^2+\frac{2\kappa-v}{s}\int_0^t\qmexp{\A'_{\tau s}}_\tau d\tau+C\frac{1}{s^2}\int_0^t\langle  \tilde \A'_{\tau s}\rangle_\tau d\tau+\frac{Ct}{s^{\beta+1}}R^d\sup_{vt\le R-r}\qmexp{n_0^2}_t.\notag
    \end{align}
Recall the remainder does not depend on $\chi$. 
 We drop the term on the l.h.s. of \eqref{ftc + contr der}, and we focus on the first integrated term on the r.h.s..  Since its coefficient is negative, due to the fact $v>2\kappa$, we can move it to the l.h.s.,
\begin{align}\label{beg boot}
    \frac{1}{s}\int_0^t\qmexp{\A'_{\tau s}}_\tau d\tau\le  CR^d\lambda^2+\frac{C}{s^2}\int_0^t\langle  \tilde \A'_{\tau s}\rangle_\tau d\tau+\frac{Ct}{s^{\beta+1}}R^d\sup_{vt\le R-r}\qmexp{n_0^2}_t.
\end{align}
Notice that for $\beta=1$ the second term on the r.h.s. of \eqref{beg boot} should be dropped.
We observe that inequality \eqref{beg boot} holds for any $\chi\in\cC$, so we can apply it, for $\tilde\chi$ and $\beta-1$, to the integrated term on the r.h.s. of \eqref{beg boot} itself.
\begin{align}
    \frac{1}{s}\int_0^t\qmexp{\A'_{\tau s}}_\tau d\tau&\le  CR^d\lambda^2\del{1+\frac{C}{s}}+   \frac{C}{s^3}\int_0^t\qmexp{\bar \A'_{t s}}_\tau d\tau+\frac{2Ct}{s^{\beta+1}}R^d\sup_{vt\le R-r}\qmexp{n_0^2}_t,\notag
\end{align}
for some new $\bar{\chi}\in\cC$.
We obtained an inequality very similar in structure to \eqref{beg boot}, with the advantage of having an additional factor $s^{-1}$ in front of the integrated ASTLO. We iterate this procedure, applying \eqref{beg boot} for $\beta-2,\,\beta-3,\dots$, and we obtain
\begin{align}\label{contr int astlo}
    \frac{1}{s}\int_0^t\qmexp{\A'_{\tau s}}_\tau d\tau&\le  CR^d\lambda^2\del{1+C\sum_{k=1}^{\beta-1}{s^{-k}}}+ \frac{Ct}{s^{\beta+1}}R^d\sup_{vt\le R-r}\qmexp{n_0^2}_t\notag\\
    &\le  CR^d\lambda^2+  \frac{ Ct}{s^{\beta+1}}R^d\sup_{vt\le R-r}\qmexp{n_0^2}_t.\notag
\end{align}
Consider again \eqref{ftc + contr der} and drop the negative term on the r.h.s.. We apply \eqref{contr int astlo} to control the leftover integrated ASTLO, and we obtain
\begin{align}
     \qmexp{n_0 N_{B_r}}_t\le R^d\lambda^2\del{1+Cs^{-1}}+\frac{C t}{s^{\beta+1}}R^d\sup_{vt\le R-r}\qmexp{n_0^2}_t.\notag
\end{align}
This finishes the proof.
\end{proof}

\section*{Acknowledgments}

TK is supported by JST PRESTO (Grant No.
JPMJPR2116), Exploratory Research for Advanced Technology (Grant No. JPMJER2302), and JSPS Grants-in-
Aid for Scientific Research (No. JP23H01099, JP24H00071), Japan.
The research of ML is supported by the DFG through the grant TRR 352 – Project-ID 470903074 and by
the European Union (ERC Starting Grant MathQuantProp, Grant Agreement 101163620)\footnote{Views and opinions expressed are however those of the authors only and do not necessarily reflect those of the European Union or the
European Research Council Executive Agency. Neither the European Union nor the granting authority can be held responsible for
them.}. 
The research of CR is supported by the DFG through the grant
TRR 352 – Project-ID 470903074.

\bibliography{bibfile}
\end{document}